\newtheorem{theorem}{Theorem}
\newtheorem{proposition}{Proposition}
\newtheorem{corollary}{Corollary}
\newtheorem{remark}{Remark}
\newcommand*{\tran}{^{\mathrm{T}}}
\newcommand*{\nom}{\mathrm{nom}}
\newcommand*{\mean}{\mathrm{mean}}
\begin{document}
\begin{frontmatter}
\tnotetext[footnoteinfo]{%
This work was supported by the project 24-10301S of the Czech Science Foundation and by the European Union under the project Robotics and advanced industrial production No. ${\rm CZ}.02.01.01/00/22\_008/0004590$. The third author was supported by the project C14/22/092 of the Internal Funds KU Leuven and the project G092721N of the Research Foundation-Flanders (FWO - Vlaanderen).\\ \\
\textbf{Cite as:}\space Adam Peichl, Matěj Kuře, Wim Michiels, Tomáš Vyhlídal. \textit{Integrated design of system structure and delayed resonator towards efficient non-collocated vibration absorption.} Journal of Sound and Vibration, Volume 612, 2025, 119101. https://doi.org/10.1016/j.jsv.2025.119101 \\ \\
\copyright 2025 The Authors. Published by Elsevier Ltd. This is an open access article under the CC BY license (\href{https://creativecommons.org/licenses/by/4.0/}{https://creativecommons.org/licenses/by/4.0/}).\\ \\
\href{https://crossmark.crossref.org/dialog/?doi=10.1016/j.jsv.2025.119101&domain=pdf}{Check for updates}}

\title{Integrated design of system structure and delayed resonator towards efficient non-collocated vibration absorption}

\author[First]{Adam Peichl}
\author[First]{Matěj Kuře}
\author[Third]{Wim Michiels}
\author[First]{Tom\'{a}\v{s} Vyhl\'{\i}dal}

\address[First]{Department of Instrumentation
and Control Engineering, Faculty of Mechanical Engineering, Czech Technical University in Prague, Technick\'{a} 4, 166 07 Prague 6, Czechia,
{\tt}}
\address[Third]{Department of Computer Science, KU Leuven, Celestijnenlaan 200A, B-3001 Heverlee, Belgium, {\tt}}

\begin{abstract}
The problem of non-collocated vibration absorption by a delayed resonator is addressed with emphasis on system  fatigue resistance and energy efficiency of control actions. The analysis is performed for a system consisting of an arbitrary large series of flexibly linked single-degree-of-freedom masses. For the stage where the vibration of the target mass is fully absorbed by the non-collocated resonator, key forces, motion amplitudes and potential energies across the system structure are assessed. Next, a complete parameter set of the resonator gain and delay is derived, and the actuation force and power needed by the resonator for the full vibration absorption is determined. The derived quantities are utilized in forming an optimization problem to balance minimal risk of fatigue across the system structure and power needed by the resonator, under the closed loop stability and parameter constraints. Next to the gain and delay of the resonator, selected structural parameters of the system are used as variables in the constrained nonlinear optimization problem. \textcolor{black}{Experimental and numerical case studies are included to demonstrate benefits of the proposed integrated structural and control design}.  
\end{abstract}

\begin{keyword}
Non-collocated vibration absorption, Delayed resonator, Structural analysis, Constrained optimization, Fatigue, Energy efficiency.
\end{keyword}
\end{frontmatter}

\section{Introduction}
Vibration absorbers have become established tools to mitigate adverse effects of vibration in a wide range of engineering applications. Traditionally, the vibration absorber is deployed at the place of the mechanical structure, where vibrations are to be suppressed. This setting will further be referred to as {\em collocated vibration absorption}. The collocated absorber tuning, by passive \cite{richiedeiTunedMassDamper2022}, semi-active \cite{yuan2019mode}, or active methods, \cite{preumontVibrationControlActive2011}, is a relatively straightforward task and has been widely addressed in literature.
In many applications, however, the absorber cannot be placed at the target location due to various constraints. 
As a typical example we refer to machining, where the absorber cannot be placed at the tip of the cutting tool. 
In such applications, the vibration absorber needs to be deployed at a different position at the structure.
This setting will  be referred to in what follows as {\em non-collocated vibration absorption}. Thus, the vibration (and associated energy) need to propagate through a part of the system's structure before being absorbed. Inevitably, tuning of such an absorber is a considerably more difficult task since the dynamics of the involved part of the system need to be included in the design, \cite{jenkins_real-time_2019}, \cite{olgacActivelyTunedNoncollocated2021}. 

Due to the problem complexity, studies related to non-collocated vibration absorption of a harmonic excitation at frequency $\omega$ have been limited so far. In \cite{buhr1997non} a passive absorber is proposed to suppress vibrations at a remote location. 
In 
\cite{mottershead_inverse_2006},
the stiffness is passively adjusted by assigning zeros $\pm \jmath\omega$ to the {\em receptance}, i.e. to the transfer function between the excitation force and the target position to be damped, whereas 
state feedback is applied to achieve closed loop stability. 
In \cite{cha_imposing_2004} sprung masses are used to impose the points of zero vibration (so called nodes) for general elastic
structures during forced harmonic excitation. 
Maximum allowable amplitudes were studied in the subsequent work \cite{cha2005enforcing}.
By applying geometric methods, the authors of \cite{daley2008geometric}, \cite{wang2010broad}, derived an active control approach for simultaneous vibration attenuation at two separate locations in a system structure one of which is non-collocated. 
An electrical dynamic absorber is used in \cite{kim2013demonstration} to emulate the behavior of a non-collocated mechanical dynamic absorber on a setup with a flexible beam, see also \cite{kim2013modal} for an application of a modal filter.
A method for non-collocated vibration suppression in a light-weight structure has been proposed recently in \cite{richiedei_unit-rank_2022}. An active control with feedback from the non-collocated measurements is applied to shift the antiresonance frequency of a receptance to the excitation frequency $\omega$. 
A pole-zero assignment to the receptance valid for the single-input and multi-input setting is proposed in 
\cite{richiedeiPolezeroAssignmentReceptance2022}. 

In the above outlined works, the vibration suppression is achieved either by passive parameter adjustment, or by active feedback control of higher complexity. The method presented in this paper combines structural adjustment of the system and low-complexity control feedback at the active vibration absorber, analogous to feedback used in {\em Delayed Resonator} (DR). The DR concept was proposed in 1990s by Olgac and his co-workers \cite{olgac1994novel} for the standard collocated vibration absorption task. The time-delay feedback is applied to turn the active absorber into a marginally stable system with imaginary axis characteristic roots matching the excitation frequency, such that the DR acts as a perfect resonator and absorbs the vibration entirely. 
The resonator feedback can be implemented using position \cite{olgac1994novel}, velocity \cite{filipovic2002delayed}, or acceleration \cite{olgac1997active2} measurements at the absorber. Modifications of the resonator concept include a 
torsional absorber \cite{hosek1997tunable}, and an auto-tuning algorithm to enhance the robustness against uncertainties \cite{hosek2002single}. 
%
A complete dynamics analysis of the DR was performed in \cite{vyhlidal2014delayed} for the case of lumped delayed acceleration feedback, leading to a time delay system with neutral spectrum, and in \cite{7378505} for the case of distributed delayed  acceleration feedback, leading to a time delay system with retarded spectrum. The analysis was further extended in \cite{vyhlidal2019analysis} to DRs with position and velocity feedback, where the system's spectrum is also retarded. 
From the stability analysis investigated in \cite{vyhlidal2019analysis}, 
it was demonstrated that if not tuned properly, the DR can considerably lower the stability margin or  destabilize the overall system even for its collocated deployment. 
A way to handle this issue consists of the application of a multi-parameter DR feedback as proposed in \cite{PilbauerRobust}, where the stability constraint is included in the optimal design of a robust DR. Alternatively, an additional control loop can be included to regain and robustify the stability, as proposed in  \cite{kure2024robust}, while also increasing robustness with respect to frequency mismatch by placing a double root at the nominal frequency.
Other recent research directions include combining position and velocity feedback \cite{oytun2018new}, optimizing the resonator with multiple distributed delays \cite{liu2024delayed}, targeting two frequencies \cite{valasek2019real}, the deisgn of fractional-order DRs \cite{cai2023spectrum}, supplementing the DR with an amplifying mechanism \cite{liu2023robust}, and the dimensional extension of the concept to 2D \cite{vyhlidalAnalysisOptimizedDesign2022}, \cite{vsika2021two}, and 3D \cite{vsika2024three}, \cite{benevs2024collocated}. 
%
In \cite{rivaz2007active} a delay free, but filtered, proportional-integral acceleration feedback of the absorber was elaborated.
In \cite{filipovic1999vibration}, the concept of {\em linear active resonator} (LAR) was proposed by Filipovic and Schr\"{o}der. Conceptually, it mirrors the DR structure with a tuneable gain, which, however, is placed in  series with a rational transfer function instead of the delayed term used in DR. 

Recently, Olgac and Jenkins demonstrated that the DR is applicable for non-collocated vibration absorption of a system composed of a serial interconnection of flexibly linked masses \cite{jenkins_real-time_2019}, \cite{olgacActivelyTunedNoncollocated2021}, as shown in Fig. \ref{fig:non-collocated-scheme}. However, compared to the collocated DR design, part of the primary structure needs to be included in tuning the two parameters of the DR feedback. The DR together with this part of the primary form a {\em resonant substructure}, which needs be tuned as a whole. Thus, the pole couple $\pm \jmath\omega$, which becomes an active zero of the receptance, is to be assigned to the whole resonant substructure. Although such a pole placement is a relatively straightforward task, due to an increase of the structural complexity, the risk of stability loss of the whole system is even higher compared to the collocated case. It was also discussed in \cite{jenkins_real-time_2019}, \cite{olgacActivelyTunedNoncollocated2021} that the resonant substructure can only be defined in special system configurations, including the nominally analysed series of the masses. Note that the findings of Olgac and Jenkins confirm the earlier results by Filipovic and Schr\"{o}der presented in \cite{filipovic2001control}, where an analogous problem of remote (non-collocated) vibration suppression at a system composed of a series of  flexibly linked masses is solved by the LAR. Next to the theoretical analysis performed over the receptance transfer functions, an experimental validation is performed for the two-mass setup.   

In 
\cite{silm_2024_spectral_design}, a generally applicable spectral design of non-collocated vibration
suppression performed primarily by a DR is presented.
The vibration suppression is achieved by direct assignment of the receptance zeros to $\pm \jmath\omega$. In order to increase the stability margin, a stabilizing controller can be included and tuned by spectral optimization. 
Compared to the method of \cite{jenkins_real-time_2019}, \cite{olgacActivelyTunedNoncollocated2021}, the method proposed in 
\cite{silm_2024_spectral_design} does not require the existence of a resonating substructure. On the other hand, a part where absorbed energy is accumulated cannot be identified for this general setting as a rule. 
The proposed method has been validated on an experimental setup with three masses and a resonator actuated by voice-coils. In a subsequent work \cite{Saldanha2023}, an output feedback controller is used to assign the active zero couple and to stabilize the system with feedback delay. The synthesis is performed by spectral optimization and the results are also validated experimentally. In \cite{saldanha2023IFAC} the simultaneous zero assignment and stabilization in the non-collocated setting is achieved by a DR with multiple delay static feedback. A DR with delayed dynamic-feedback controller is proposed in \cite{saldanha2024TDS} to target multi-frequency non-collocated vibration absorption.

Despite the above discussed development of generally applicable methods 
\cite{silm_2024_spectral_design},
\cite{Saldanha2023}, \cite{saldanha2023IFAC}, \cite{saldanha2024TDS}, the serial configuration addressed in \cite{jenkins_real-time_2019},  \cite{olgacActivelyTunedNoncollocated2021} and \cite{filipovic2001control} deserves further attention and deeper analysis as it is the configuration often met in practice. In \cite{jenkins_real-time_2019}, \cite{olgacActivelyTunedNoncollocated2021}, it was shown 
that for given parameters of the system and the DR, only frequencies that correspond to stability regions in the parameter set can be successfully suppressed. 
Another important observation and a problem opened in \cite{jenkins_real-time_2019} and \cite{olgacActivelyTunedNoncollocated2021} concern fatigue loading across the structure. It was demonstrated by simulations, that, as a consequence of suppressing vibrations at the  non-collocated target mass, the amplitudes of the steady-state oscillations across the structure either increase or decrease compared to the passive case. Therefore, it imposes either higher or lesser fatigue loading. 
\color{black}

The main objective of this paper is to address the two potentially risky features of non-collocated vibration absorption
identified by Olgac and Jenkins in the design methodology, \cite{jenkins_real-time_2019}, \cite{olgacActivelyTunedNoncollocated2021}, i.e., increased fatigue loading and impact on stability. Additionally, we aim at minimization of energy needs by the DR in the non-collocated vibration absorption task. 
Through detailed analysis of system model in phasor form, in \cref{section:structural-analysis-and-DR-design}, we show that for given parameters of the primary, the fatigue loading across the structure can be changed neither by choosing an alternative control method nor by redesigning the absorber parameters. It can only be changed by modifying the parameters of the structure. Such parameter adjustment can also be beneficial towards either gaining or strengthening stability of the structure with non-collocated DR. Next, in terms of decomposed phasor model characteristics, we provide the DR feedback synthesis and assessment of energy needed for the non-collocated vibration absorption. 
Utilizing these characteristics, an integrated design procedure is proposed in \cref{section:composed-structural-optimization} for the simultaneous optimization of the system's structural properties and the DR's feedback loop. 
For a given excitation frequency $\omega$, the objectives are set as: 
i) suppress fully vibration at the target non-collocated location with a sufficient stability margin,
ii) minimize fatigue loading 
     at the most risky place at the system structure, and iii)
minimize the power needed for vibration suppression by the active feedback.
 The objective i) is turned to constraints, whereas the objectives ii) and iii) are balanced within the objective function. This leads to a constrained nonlinear optimization problem to be solved. Two case studies are included to demonstrate and validate the proposed results. In \cref{section:case-study-experimental}, the mechanisms by which the optimized parameters influence the optimization objectives is demonstrated and experimentally validated on a mechatronic set-up with three masses. Consequently, for a more complex system structure with five masses, a thorough numerical case study is performed in \cref{section:case-study}. It is shown that the constrained nonlinear optimization problem can in principle be solved by recently proposed GRANSO tool \cite{curtis2017bfgs}, utilizing  functionalities of the TDS-CONTROL toolbox \cite{appeltans2022tds}.
%
Conclusions and future research directions are stated in \cref{section:conclusions}. 
\color{black}
\section{System structural analysis and delayed resonator design}
\label{section:structural-analysis-and-DR-design}
\begin{figure}[t]
    \centering
    \begin{subfigure}[t]{\textwidth}
        \centering
        \includegraphics[width=\textwidth]{"./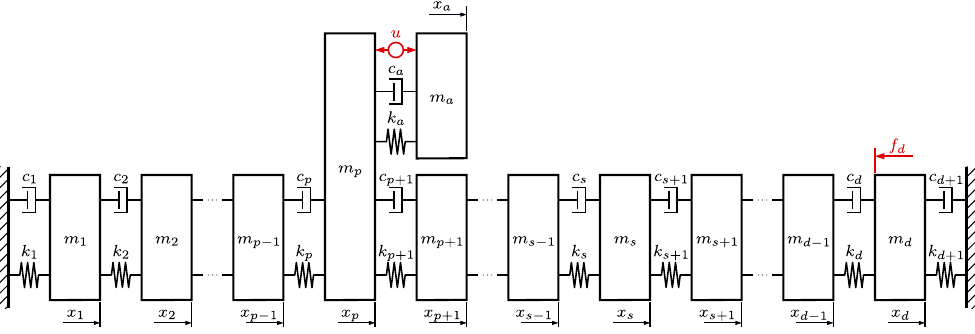"}
        \caption{The overall system of flexibly linked masses being excited by an $\omega$-harmonic force $f_d$, with an active absorber $m_a$ mounted at mass $m_p$, and the objective to ideally suppress vibration at the mass $m_s$}
         \label{subfig:non-collocated-scheme-overall}
    \end{subfigure}
    \begin{subfigure}[t]{\textwidth}
        \centering
        \includegraphics[width=\textwidth]{"./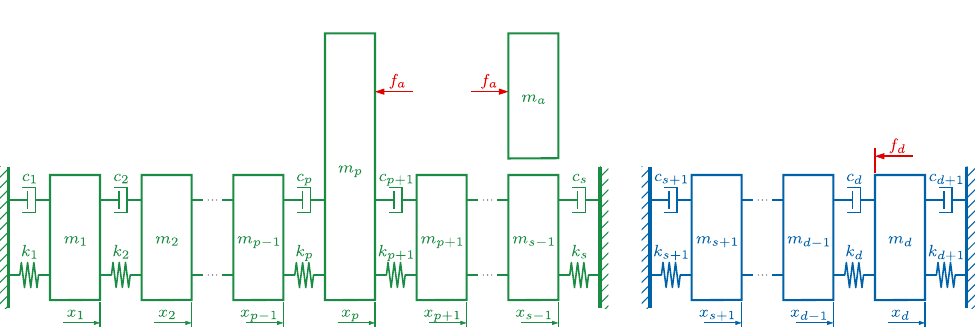"}
        \caption{Separation of the system when the vibration at the mass $m_s$ is ideally suppressed (i.e., it can be considered as a motion-less barrier) into i) a {\em resonating subsystem} (green) excited by $f_a$, ii) {\em vibrating subsystem} (blue) excited by $f_d$, and iii) {\em absorber} being excited by $f_a$.}
         \label{subfig:non-collocated-scheme-final-stage}
    \end{subfigure}
    \caption{System configuration and its structural decomposition.}
    \label{fig:non-collocated-scheme}
\end{figure}

 Assume a linear chain of $d$ flexibly linked masses $m_i$ as shown in Fig. \ref{fig:non-collocated-scheme} a). Each mass $m_i$ with position $x_i$ is connected to its predecessor via a spring $k_i$ and a dumper $c_i$. The first and the last masses ($m_1$ and $m_d$) are connected to a base. Assume the $\omega$-harmonic disturbance force 
  \begin{equation}\label{eq:disturbance-cosine-form}
     f_d(t) = \bar{f}_d \cos{\omega t}
 \end{equation}
with amplitude $\bar f_d$ acts on mass $m_d$. The objective is to fully suppress vibration of the mass $m_s$. It is to be done by an active \emph{vibration absorber} with mass $m_a$, position $x_a$, damping $c_a$, and stiffness $k_a$, and an active control force $u(t)$. The absorber is deployed at the mass $m_p$. In the analysis which follows, we assume $p < s \leq d$, implying solution of \emph{non-collocated vibration absorption} problem. The case, where $p=s$ degrades to the standard problem of {\em collocated vibration absorption}, thoroughly analyzed, e.g., in \cite{vyhlidal2019analysis}. Note also that for $p < s = d$, i.e., the periodic force is acting directly on mass $m_s$, which substantially simplifies the problem to be solved, as demonstrated below. 

 For analytical purposes, let us introduce a \emph{total force in the absorber's link} as
\begin{equation}\label{eq:force}
   f_a(t)=u(t)+k_a(x_p(t)-x_a(t))+c_a(\dot{x}_p(t)-\dot{x}_a(t)), 
 \end{equation}
which allows us to free the absorber mass $m_a$ from the mass $m_p$, as shown in  Fig. \ref{fig:non-collocated-scheme} b). 
The system of linked masses $m_i,\ i=1,\dots,d$  can then be described by a set of $d$ second-order linear equations
 \begin{equation}\label{eq:system-general-MCK}
   \mathbf{M}\ddot{\mathbf{x}}(t) + \mathbf{C}\dot{\mathbf{x}}(t) + \mathbf{K}\mathbf{x}(t) = \mathbf{B}_d f_d(t) + \mathbf{B}_a f_a(t),
 \end{equation}
 where $\mathbf{x}(t)$ is vector of displacements $x_i,\ i=1, \dots, d$, \emph{mass matrix} $\mathbf{M}$ is of a diagonal form
 \begin{equation}
     \mathbf{M} = \text{diag}\left(m_1, m_2, \dots, m_p, \dots, m_s, \dots, m_{d-1}, m_d\right).
 \end{equation}
 Defining $\mathbf{k} = \left[k_2, \dots, k_p, \dots, k_s, \dots, k_{d-1}, k_d\right]$, and denoting zero column vector of length $n$ as $\mathbf{o}_{n}$, the \emph{stiffness matrix} has the tridiagonal structure
 \begin{equation}
     \mathbf{K}
     =\text{diag}([ k_1 \ \mathbf{k}]) + \text{diag}([ \mathbf{k} \ k_{d+1}])-
     \begin{bmatrix}
         \mathbf{o}_{d-1} & \text{diag}(\mathbf{k})\\
           0    & \mathbf{o}_{d-1}\tran
     \end{bmatrix}
     -\begin{bmatrix}
         \mathbf{o}_{d-1}\tran & 0\\
           \text{diag}(\mathbf{k})    & \mathbf{o}_{d-1}
     \end{bmatrix},
 \end{equation}
 and \emph{damping matrix} $\mathbf{C}$ is of the same structure as $\mathbf{K}$, provided that each spring stiffness $k_i$ is replaced by the corresponding damping $c_i$.  
  The system input matrices are given by
 \begin{equation}\label{eq:system-general-Bd}
    \mathbf{B}_d = \begin{bmatrix} \mathbf{o}_{d-1}\tran & 1 \end{bmatrix}\tran,
 \end{equation}
 and
 \begin{equation}\label{eq:system-general-Ba}
    \mathbf{B}_a = \begin{bmatrix} \mathbf{o}_{p-1}\tran 
& 1 & \mathbf{o}_{d-p}\tran \end{bmatrix}\tran.
 \end{equation}

 In order to study the steady-state harmonic motion of the masses at the frequency $\omega$, let us turn the harmonic force \eqref{eq:disturbance-cosine-form} to
 \begin{equation}\label{eq:system-general-vector-disturbance}
    f_d(t) = \Re\{ \vec{f}_d e^{\jmath\omega t}\},
 \end{equation}
 with $\vec{f_d}$ being a phasor. The overall system \eqref{eq:system-general-MCK} can then be transformed to
 \begin{equation}\label{eq:sys-general-vector-form}
     \mathbf{A}(\omega) \vec{\mathbf{x}} = \mathbf{B}_d\vec{f}_d + \mathbf{B}_a\vec{f}_a,
 \end{equation}
 where the \emph{dynamic stiffness characteristic matrix} is given by
 \begin{equation}\label{eq:system-general-DSM}
     \mathbf{A}(\omega) = -\mathbf{M}\omega^2 + \mathbf{C} \jmath\omega + \mathbf{K}.
 \end{equation}
 From \eqref{eq:sys-general-vector-form} the phasor representing the displacement of mass $m_i$ can be expressed as
\begin{equation}\label{eq:system-displacement-mass-i-vector-form}
    \vec{x}_i = \mathbf{e}_i\tran \mathbf{A}^{-1}(\omega) \left( \mathbf{B}_d\vec{f}_d + \mathbf{B}_a\vec{f}_a \right),
\end{equation}
where $\mathbf{e}_i=\begin{bmatrix}  \mathbf{o}\tran_{i-1} & 1 & \mathbf{o}\tran_{d-i}\end{bmatrix}\tran$ encodes the position of mass $m_i$ with respect to the whole chain of masses.

Considering the objective of stopping mass $m_s$, with a reference to \cref{fig:non-collocated-scheme}, the system can be split into three subsystems: i) the {\em resonating subsystem} given by the masses $m_j,\ j=1, 2, \dots, s-1$; ii) the {\em target mass} $m_s$ to be stopped; and iii) the {\em vibrating subsystem} given by masses $m_k,\ k=s+1, s+2, \dots, d$. Respecting this subsystem division, the vector of displacements phasor is formed as
 \begin{equation}\label{eq:subsystems-positions-vector-form}
     \vec{\mathbf{x}} = \begin{bmatrix} \vec{\mathbf{x}}_R\tran & \vec{x}_s & \vec{\mathbf{x}}_V\tran \end{bmatrix}\tran,
 \end{equation}
 where $\vec{\mathbf{x}}_R=
\begin{bmatrix} \vec{x}_1 & \dots & \vec{x}_{s-1} \end{bmatrix}\tran$ and $\vec{\mathbf{x}}_V=\begin{bmatrix} \vec{x}_{s+1} & \dots & \vec{x}_{d} \end{bmatrix}\tran$ are vectors of displacements of resonating and vibrating subsystems, respectively. Analogously, we can express the dynamic stiffness matrix as
 \begin{align}\label{eq:subsystems-DSM}
     \mathbf{A}(\omega)
     =
     \begin{bmatrix}
      \mathbf{A}_R(\omega) & \mathbf{a}_{R}(\omega) & \mathbf{0} \\
      \mathbf{a}\tran_{R}(\omega) & a_{s,s}(\omega) & \mathbf{a}\tran_{V}(\omega) \\
      \mathbf{0}  & \mathbf{a}_{V}(\omega) & \mathbf{A}_V(\omega)
    \end{bmatrix},
 \end{align}
 where $\mathbf{a}_{R}\tran = \begin{bmatrix} \mathbf{o}_{s-1}\tran & - \jmath\omega c_{s} - k_{s}\end{bmatrix}$ and  $\mathbf{a}_{V}\tran = \begin{bmatrix} - \jmath\omega c_{s+1} - k_{s+1} & \mathbf{o}_{d-s}\tran\end{bmatrix}$ represent links from vibrating and resonating subsystem to target mass $m_s$, respectively. Input matrices are defined as
 \begin{align}
     \mathbf{B}_d = \begin{bmatrix} \mathbf{o}_{s-1}\tran & 0 & \mathbf{D}_d\tran  \end{bmatrix}\tran, \label{eq:subsystems-Bd} \\
     \mathbf{B}_a = \begin{bmatrix} \mathbf{D}_a\tran & 0 &\mathbf{o}_{d-s}\tran \end{bmatrix}\tran, \label{eq:subsystems-Ba}
 \end{align}
 where $\mathbf{D}_d=\begin{bmatrix}\mathbf{o}_{d-s-1}\tran & 1\end{bmatrix}\tran$ and $\mathbf{D}_a=\begin{bmatrix}\mathbf{o}_{p-1}\tran & 1 & \mathbf{o}_{s-p}\tran\end{bmatrix}\tran$.

\subsection{Force and motion balance at the target stage}
 Assume the \emph{target stage} with  
 \begin{align}\label{eq:assume-stop-ms}
     \vec{x}_s = 0
 \end{align}
 is reached thanks to the active absorber action fully compensating the effect of excitation force \eqref{eq:system-general-vector-disturbance}. Then, as indicated in Fig. \ref{subfig:non-collocated-scheme-final-stage}, the mass $m_s$ can be considered as a motion-less barrier and 
 the system can be split into {\em vibrating subsystem}  
 \begin{equation}\label{eq:sys-vibrating-vector-form}
     \mathbf{A}_V(\omega) \vec{\mathbf{x}}_V = \mathbf{D}_d\vec{f}_d,
 \end{equation}
and {\em resonating subsystem} 
  \begin{equation}\label{eq:sys-resonating-vector-form}
     \mathbf{A}_R(\omega) \vec{\mathbf{x}}_R = \mathbf{D}_a\vec{f}_a.
 \end{equation}
 The relation between the required force $\vec{f}_a$ and disturbance force $\vec{f}_d$ is clarified in the following proposition.
  
\begin{proposition}
    \label{proposition:total-absorber-force}
    Assume the system \eqref{eq:system-general-MCK} turned into the phasor form \eqref{eq:sys-general-vector-form}, with the output \eqref{eq:subsystems-positions-vector-form}, and matrices \eqref{eq:subsystems-DSM}, \eqref{eq:subsystems-Bd} and \eqref{eq:subsystems-Ba}, is excited by a harmonic force $f_d$ with frequency $\omega$ given in \eqref{eq:system-general-vector-disturbance}. Furthermore, assume $p < s \leq d$, $a_{s,s} \neq 0$ and invertibility of submatrices $\mathbf{A}_R(\omega)$ and $\mathbf{A}_V(\omega)$. Then, as soon as the motion of the {\em target mass} $m_s$ is perfectly stopped (i.e. $\vec{x}_s = 0$) the phasor of the total force in the absorber's link is given by
    \begin{equation}\label{eq:prop1-fa-normal-case}
     \vec{f_a} = -\left(\mathbf{a}_V\tran(\omega) \mathbf{A}_V^{-1}(\omega) \mathbf{D}_d \right) \left(\mathbf{a}_R\tran(\omega
) \mathbf{A}_R^{-1}(\omega) \mathbf{D}_a\right)^{-1} \vec{f}_d,
    \end{equation}
    for $p<s<d$ and by
    \begin{equation}\label{eq:prop1-fa-marginal-case}
     \vec{f}_a = \left(\mathbf{a}_R\tran(\omega) \mathbf{A}_R^{-1}(\omega) \mathbf{D}_a \right)^{-1} \vec{f}_d,
    \end{equation}
    for $p<s=d$.
\end{proposition}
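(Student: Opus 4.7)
The plan is to exploit the block-tridiagonal structure of the dynamic stiffness matrix \eqref{eq:subsystems-DSM} together with the zero-displacement condition \eqref{eq:assume-stop-ms} to decouple the system into the two independent subsystem equations \eqref{eq:sys-vibrating-vector-form}, \eqref{eq:sys-resonating-vector-form}, and then close the loop via the scalar equation arising from the middle block row of \eqref{eq:sys-general-vector-form}.

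First, I would partition the phasor equation \eqref{eq:sys-general-vector-form} into three block rows matching the decomposition \eqref{eq:subsystems-positions-vector-form}. Substituting $\vec{x}_s=0$ into the top block row collapses the $\mathbf{a}_R(\omega)\vec{x}_s$ term, and since $\mathbf{B}_d$ has no component on the resonating block (see \eqref{eq:subsystems-Bd}), one recovers exactly \eqref{eq:sys-resonating-vector-form}. A symmetric argument for the bottom block row, using \eqref{eq:subsystems-Ba}, yields \eqref{eq:sys-vibrating-vector-form}. By the assumed invertibility of $\mathbf{A}_R(\omega)$ and $\mathbf{A}_V(\omega)$, I can then write
\begin{equation*}
\vec{\mathbf{x}}_R = \mathbf{A}_R^{-1}(\omega)\mathbf{D}_a\vec{f}_a, \qquad \vec{\mathbf{x}}_V = \mathbf{A}_V^{-1}(\omega)\mathbf{D}_d\vec{f}_d.
\end{equation*}

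Second, I would inspect the middle scalar row of \eqref{eq:sys-general-vector-form} at index $s$. For $p<s<d$, mass $m_s$ carries neither the disturbance nor the absorber force, so the right-hand side vanishes and the row reads $\mathbf{a}_R\tran(\omega)\vec{\mathbf{x}}_R + a_{s,s}(\omega)\vec{x}_s + \mathbf{a}_V\tran(\omega)\vec{\mathbf{x}}_V = 0$. Using $\vec{x}_s=0$ and substituting the two expressions above yields a scalar relation linear in $\vec{f}_a$ and $\vec{f}_d$; solving for $\vec{f}_a$ produces \eqref{eq:prop1-fa-normal-case}. For the marginal case $p<s=d$, the vibrating subsystem is empty, while $\mathbf{B}_d$ now places the disturbance directly on $m_s$, so the middle-row RHS equals $\vec{f}_d$; the same substitution then gives \eqref{eq:prop1-fa-marginal-case}.

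The routine block-elimination is mechanical; the only real point to verify is that the scalar factor $\mathbf{a}_R\tran(\omega)\mathbf{A}_R^{-1}(\omega)\mathbf{D}_a$ is nonzero, so that its inverse in \eqref{eq:prop1-fa-normal-case}--\eqref{eq:prop1-fa-marginal-case} is well defined. This is the statement that the resonating subsystem has a nonzero receptance from the absorber location $p$ to the mass-$s$ coupling channel at the given frequency $\omega$, and is a genuine nondegeneracy condition implicit in the proposition. I would flag this (or add it as a hypothesis) rather than attempt to derive it from $a_{s,s}\neq 0$ alone, since the latter controls only the direct driving-point term at $m_s$ and does not a priori preclude a transmission zero from $p$ through the resonating chain to $m_s$ at the excitation frequency. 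Once this nondegeneracy is acknowledged, both formulas follow immediately.
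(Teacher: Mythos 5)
Your proposal follows essentially the same route as the paper's proof: partition \eqref{eq:sys-general-vector-form} into the three block rows, set $\vec{x}_s=0$ to decouple and solve the resonating and vibrating blocks for $\vec{\mathbf{x}}_R$ and $\vec{\mathbf{x}}_V$, and eliminate them from the middle scalar row to obtain $\vec{f}_a$, with the $s=d$ case handled by the degenerate two-block version. Your additional observation that the scalar $\mathbf{a}_R\tran(\omega)\mathbf{A}_R^{-1}(\omega)\mathbf{D}_a$ must be nonzero for the inversion to make sense is correct and is indeed a nondegeneracy condition the proposition leaves implicit rather than one that follows from $a_{s,s}\neq 0$.
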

\begin{proof}
    For $p<s<d$, we rewrite the system \eqref{eq:sys-general-vector-form} into
    \begin{align}
        \mathbf{A}_R(\omega) \vec{\mathbf{x}}_R + \mathbf{a}_R(\omega) \vec{x}_s = \mathbf{D}_a \vec{f}_a,\label{eq:prop1-proof-resonating} \\
        \mathbf{a}_R\tran(\omega) \vec{\mathbf{x}}_R + a_{s,s}(\omega) \vec{x}_s + \mathbf{a}_V\tran(\omega) \vec{\mathbf{x}}_V = 0, \label{eq:prop1-proof-stopped}\\
        \mathbf{a}_R(\omega) \vec{x}_s + \mathbf{A}_V(\omega) \vec{\mathbf{x}}_V = \mathbf{D}_d \vec{f}_d. \label{eq:prop1-proof-vibrating}
    \end{align}
    Substituting $\vec{x}_s = 0$ to \eqref{eq:prop1-proof-resonating}, \eqref{eq:prop1-proof-stopped} and \eqref{eq:prop1-proof-vibrating}, and solving \eqref{eq:prop1-proof-resonating} for $\vec{\mathbf{x}}_R$, and \eqref{eq:prop1-proof-vibrating} for $\vec{\mathbf{x}}_V$, respectively, yields
    \begin{equation}\label{eq:prop1-proof-solved-xr}
        \vec{\mathbf{x}}_R = \mathbf{A}_R^{-1}(\omega) \mathbf{D}_a \vec{f}_a,
   \end{equation}
   and
   \begin{equation}\label{eq:prop1-proof-solved-xv}
        \vec{\mathbf{x}}_V = \mathbf{A}_V^{-1}(\omega) \mathbf{D}_d \vec{f}_d.
    \end{equation}
    Substituting \eqref{eq:prop1-proof-solved-xr} and \eqref{eq:prop1-proof-solved-xv} into \eqref{eq:prop1-proof-stopped} and expressing $\vec{f}_a$ gives \eqref{eq:prop1-fa-normal-case}.

    For the case $p<s=d$, system of equations degenerates to
    \begin{align}
        \mathbf{A}_R(\omega) \vec{\mathbf{x}}_R + \mathbf{a}_R(\omega) \vec{x}_s = \mathbf{D}_a \vec{f}_a,\label{eq:prop1-proof-resonating-degraded} \\
        a_{s,s}(\omega) \vec{x}_s + \mathbf{a}_R\tran(\omega) \vec{\mathbf{x}}_R = \vec{f}_d.\label{eq:prop1-proof-stopped-vibrating-degraded}
    \end{align}
    Analogously as in the above case, substituting $\vec{x}_s = 0$ to \eqref{eq:prop1-proof-resonating-degraded} and \eqref{eq:prop1-proof-stopped-vibrating-degraded} and eliminating $\vec{\mathbf{x}}_R$ from the equation set, yields \eqref{eq:prop1-fa-marginal-case}.
 \end{proof}

\subsection{Fatigue loading expressed in terms of elastic potential energy at the links}
In general, stopping the mass $m_s$ leads to a redistribution of the oscillation amplitudes among the remaining masses. In some cases, it can go above safe limit and even cause damage due to accumulated fatigue. In order to assess this risk, let us express the elastic potential energy in $i^{\text{th}}$-link as 
\begin{equation}\label{eq:elastic-energies-general-timedomain}
     W_i(t) = \frac{1}{2}k_i\left( x_i(t) - x_{i-1}(t) \right)^2,
\end{equation}
 which can be considered as a measure to evaluate the risk of fatigue.
 
 Let $\Delta x_{i,i-1}(t)=x_i(t)-x_{i-1}(t)$ 
 and let $\Delta\vec x_{i,i-1}$ denote the corresponding phasor. 
 Then we can express $\Delta x_{i,i-1}(t)=\left|\Delta \vec x_{i,i-1}\right|\cos(\omega t+ \varphi )$ with $\varphi=\angle\left(\Delta\vec x_{i,i-1}\right)$ and 
 \[
W_i(t)=\frac{1}{2}k_i \left|\Delta \vec x_{i,i-1}\right|^2 \cos^2(\omega t+\varphi)= \frac{1}{4} k_i \left|\Delta \vec x_{i,i-1}\right|^2 \left(1+\cos(2\omega t+2\varphi)\right).
 \]
The latter expression can be written in the form
 \begin{equation}\label{eq:elastic-energies-vector-form}
     W_i(t)= \Bar{W}_{i, \mean} + \Re\left\{ \vec{W}_i e^{\jmath 2\omega t} \right\},
 \end{equation}
where the first term denotes the average elastic energy,
 \begin{equation}\label{eq:elastic-energy-phasor-mean}
     \Bar{W}_{i, \mean} = \frac{1}{4}k_i  \left|\Delta \vec{x}_
{i, i-1}\right|^2,
 \end{equation}
  and
 \begin{equation}\label{eq:elastic-energy-phasor-time}
     \vec{W}_{i}= \frac{1}{4}k_i \left(\Delta\vec{x}_{i, i-1}\right)^2
 \end{equation}
 can be interpreted as a phasor rotating with doubled frequency.  The maximal elastic energy satisfies
 \begin{align}
     W_{i, \max}  &=\Bar{W}_{i, \mean} + |\vec{W}_{i}|
     \\ &=\frac{1}{2} k_i \left| \Delta \vec{x}_{i, i-1} \right|^2 \label{eq:fatique-loading:Wimax-general} \\
      &= \frac{1}{2} k_i \left|\left(\mathbf{e}_i\tran - \mathbf{e}_{i-1}\tran\right) \mathbf{A}^{-1}(\omega) \left( \mathbf{B}_d\vec{f}_d + \mathbf{B}_a\vec{f}_a \right)\right|^{2}. \label{eq:fatique-loading:W_imax-general-long-format}
 \end{align}
 Note that the elastic potential energies in links between masses and base are solved by explicitly setting $\mathbf{e}_{0} = \mathbf{e}_{d+1} = \mathbf{o}_{d}$.
 
 When the final stage is reached, maximal values of the elastic potential energies can be simplified as given in the following proposition.
 
\begin{proposition}
    \label{proposition:potential-elastic-energies-links}
    When the final stage is reached, the maximal elastic potential energy in the $i^{\text{th}}$-link of the resonating subsystem is given by 
    \begin{equation}\label{eq:elastic-energies-resonating-vector-form}
         W_{i, \max} = \frac{1}{2}k_i\left|\left(\mathbf{e}\tran_{r,i} - \mathbf{e}\tran_{r, i-1}\right)\mathbf{A}_R^{-1}(\omega) \mathbf{D}_a \vec{f}_a\right|^2, \ \  1 \leq i \leq s, 
    \end{equation}
    while for the $j^{\text{th}}$-link of vibrating subsystem, it is given by
    \begin{equation}\label{eq:elastic-energies-vibrating-vector-form}
        W_{j, \max} = \frac{1}{2}k_j\left|\left(\mathbf{e}\tran_{v,j} - \mathbf{e}\tran_{v,j-1}\right)\mathbf{A}_V^{-1}(\omega) \mathbf{D}_d \vec{f}_d\right|^2, \ \ s+1 \leq j \leq d+1,
    \end{equation}
    where $\mathbf{e}_{r,i}=\begin{bmatrix}  \mathbf{o}_{i-1}\tran & 1 & \mathbf{o}_{s-i-1}\tran\end{bmatrix}\tran$ and $\mathbf{e}_{v,j}=\begin{bmatrix}  \mathbf{o}_{s-j-1}\tran & 1 & \mathbf{o}_{d-j}\tran\end{bmatrix}\tran$ encode the position of the mass $m_i$ in the resonating subsystem and the position of the mass $m_j$ in vibrating subsystem, respectively. The links at the base are handled via setting $\mathbf{e}_{r,0} = \mathbf{e}_{r,s} = \mathbf{o}_{s-1}$ for the resonating subsystem, and $\mathbf{e}_{v,s} = \mathbf{e}_{v,d+1} = \mathbf{o}_{d-s-1}$ for vibrating subsystem.
 \end{proposition}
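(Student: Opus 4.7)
The plan is to start from the general expression \eqref{eq:fatique-loading:Wimax-general}, namely $W_{i,\max} = \tfrac{1}{2}k_i|\Delta\vec{x}_{i,i-1}|^2$, and to specialize the relative displacement phasor $\Delta\vec{x}_{i,i-1}=\vec{x}_i-\vec{x}_{i-1}$ by substituting the reduced subsystem solutions derived inside the proof of \Cref{proposition:total-absorber-force}. Concretely, the target stage ($\vec{x}_s=0$) splits the chain into the resonating block \eqref{eq:sys-resonating-vector-form} and the vibrating block \eqref{eq:sys-vibrating-vector-form}, whose solutions are exactly \eqref{eq:prop1-proof-solved-xr}--\eqref{eq:prop1-proof-solved-xv}; any link then either belongs to the resonating part or to the vibrating part, and the two cases can be treated in parallel.

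First, I would consider a link $i$ of the resonating subsystem (so $1\leq i\leq s$). For interior indices $2\leq i\leq s-1$ both masses lie among $m_1,\dots,m_{s-1}$, so that $\vec{x}_i=\mathbf{e}_{r,i}\tran\vec{\mathbf{x}}_R$ and $\vec{x}_{i-1}=\mathbf{e}_{r,i-1}\tran\vec{\mathbf{x}}_R$, hence $\Delta\vec{x}_{i,i-1}=(\mathbf{e}_{r,i}\tran-\mathbf{e}_{r,i-1}\tran)\,\mathbf{A}_R^{-1}(\omega)\mathbf{D}_a\vec{f}_a$ by \eqref{eq:prop1-proof-solved-xr}. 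Squaring and multiplying by $\tfrac12 k_i$ yields \eqref{eq:elastic-energies-resonating-vector-form}. The boundary links require a short separate check that the stated conventions $\mathbf{e}_{r,0}=\mathbf{e}_{r,s}=\mathbf{o}_{s-1}$ produce the correct formulas: for $i=1$, the ``predecessor'' is the fixed base so $\vec{x}_0\equiv 0$, which is reproduced by $\mathbf{e}_{r,0}\tran\vec{\mathbf{x}}_R=0$; for $i=s$, the link connects to the stopped target mass, so $\vec{x}_s=0$ must also be represented by a zero selector, exactly $\mathbf{e}_{r,s}\tran\vec{\mathbf{x}}_R=0$.

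An analogous argument handles links $j$ of the vibrating subsystem ($s+1\leq j\leq d+1$), using \eqref{eq:prop1-proof-solved-xv} to write $\vec{\mathbf{x}}_V=\mathbf{A}_V^{-1}(\omega)\mathbf{D}_d\vec{f}_d$ and the selector vectors $\mathbf{e}_{v,j}$ defined for the $V$-block. The two boundary cases are again $j=s+1$, where one endpoint is the stopped mass $m_s$ and is represented by $\mathbf{e}_{v,s}=\mathbf{o}_{d-s-1}$, and $j=d+1$, where the link ends at the base and is represented by $\mathbf{e}_{v,d+1}=\mathbf{o}_{d-s-1}$. Substituting into \eqref{eq:fatique-loading:Wimax-general} gives \eqref{eq:elastic-energies-vibrating-vector-form}.

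The only subtle point, and the step I expect to require the most care when writing the argument in full, is the bookkeeping of indices and selector vectors: the subsystems are indexed relative to their own coordinates, while the original formula \eqref{eq:fatique-loading:W_imax-general-long-format} is indexed in the global chain, so one has to verify that the restriction of $(\mathbf{e}_i-\mathbf{e}_{i-1})\tran\mathbf{A}^{-1}(\omega)(\mathbf{B}_d\vec{f}_d+\mathbf{B}_a\vec{f}_a)$ to a link inside one subsystem coincides with the corresponding expression built from the reduced matrix $\mathbf{A}_R^{-1}$ or $\mathbf{A}_V^{-1}$. This decoupling is precisely what \eqref{eq:prop1-proof-resonating}--\eqref{eq:prop1-proof-vibrating} guarantee once $\vec{x}_s=0$ is imposed, because the coupling columns $\mathbf{a}_R(\omega)$, $\mathbf{a}_V(\omega)$ are multiplied by $\vec{x}_s=0$ and drop out. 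The remaining algebra is then a direct substitution and a final application of $|\cdot|^2$.
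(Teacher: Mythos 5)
Your proposal is correct and follows essentially the same route as the paper: both substitute the decoupled subsystem solutions \eqref{eq:prop1-proof-solved-xr} and \eqref{eq:prop1-proof-solved-xv} into the general expression \eqref{eq:fatique-loading:Wimax-general} via the selector vectors $\mathbf{e}_{r,i}$, $\mathbf{e}_{v,j}$. Your explicit treatment of the boundary links (base and stopped mass $m_s$) is a welcome elaboration of a detail the paper handles only by stating the conventions.
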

 \begin{proof}
For resonating subsystem, i.e. $0 < i \leq s$, we rewrite \eqref{eq:prop1-proof-solved-xr} to
    \begin{equation}\label{eq:prop2-position-resonating}
        \vec{x}_i = \mathbf{e}\tran_{r,i} \mathbf{A}_R^{-1}(\omega) \mathbf{D}_a \vec{f}_a,
    \end{equation}
    which can be directly substituted to \eqref{eq:fatique-loading:Wimax-general} yielding \eqref{eq:elastic-energies-resonating-vector-form}.

    Analogously for vibrating subsystem, i.e.  $s < j \leq d+1$, we can rewrite \eqref{eq:prop1-proof-solved-xv} to
    \begin{equation}
        \vec{x}_j = \mathbf{e}\tran_{v,j} \mathbf{A}_V^{-1}(\omega) \mathbf{D}_d \vec{f}_d,
    \end{equation}
    and substitute it to \eqref{eq:fatique-loading:Wimax-general} to obtain \eqref{eq:elastic-energies-vibrating-vector-form}.
 \end{proof}

Combining results from Propositions \ref{proposition:total-absorber-force} and \ref{proposition:potential-elastic-energies-links} allows us to form the following Theorem.
 
\begin{theorem}
    \label{remark:independence-of-fa-absorber-final-stage}
    The maximal elastic potential energies at the final stage defined in \eqref{eq:elastic-energies-resonating-vector-form} and \eqref{eq:elastic-energies-vibrating-vector-form} only depend on the disturbance force (specifically on frequency $\omega$ and amplitude $\bar f_d$) and mass, damping, and stiffness parameters within the resonating and vibrating subsystems.
\end{theorem}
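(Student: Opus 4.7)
The plan is to prove the theorem essentially by substitution and inspection, so I would structure the argument around two cases (vibrating subsystem, resonating subsystem) and then verify that all quantities appearing in the final expressions are parametrized only by the quantities listed in the theorem.

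First I would handle the vibrating subsystem, which is the easy half. The formula \eqref{eq:elastic-energies-vibrating-vector-form} depends only on $k_j$, on $\mathbf{A}_V(\omega)$, on the (constant) selector matrices $\mathbf{e}_{v,j}, \mathbf{e}_{v,j-1}, \mathbf{D}_d$, and on $\vec f_d$. By construction $\mathbf{A}_V(\omega)$ is assembled exclusively from the masses, dampers and stiffnesses within the vibrating subsystem (i.e.\ indices $s+1,\dots,d$), and $\vec f_d$ encodes precisely the frequency $\omega$ and amplitude $\bar f_d$ of the excitation. Hence the claim is immediate in this case, with no dependence on absorber parameters, on the resonating subsystem, or on the control law $u(t)$.

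Second I would treat the resonating subsystem, where the dependence on $\vec f_a$ in \eqref{eq:elastic-energies-resonating-vector-form} seemingly couples the energy to the absorber. The key step is to substitute the closed-form expression for $\vec f_a$ given by \cref{proposition:total-absorber-force}, either \eqref{eq:prop1-fa-normal-case} for $p<s<d$ or \eqref{eq:prop1-fa-marginal-case} for $p<s=d$. After substitution one obtains an expression of the form
\begin{equation*}
W_{i,\max}=\tfrac{1}{2}k_i\left|\bigl(\mathbf{e}_{r,i}\tran-\mathbf{e}_{r,i-1}\tran\bigr)\mathbf{A}_R^{-1}(\omega)\mathbf{D}_a\bigl(\mathbf{a}_R\tran(\omega)\mathbf{A}_R^{-1}(\omega)\mathbf{D}_a\bigr)^{-1}\bigl(\mathbf{a}_V\tran(\omega)\mathbf{A}_V^{-1}(\omega)\mathbf{D}_d\bigr)\vec f_d\right|^2,
\end{equation*}
with the analogous simplification in the marginal case $s=d$. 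I would then verify, by going back to the definitions in \eqref{eq:subsystems-DSM}--\eqref{eq:subsystems-Ba}, that the blocks $\mathbf{A}_R(\omega), \mathbf{a}_R(\omega)$ contain only the masses, dampers and stiffnesses of the resonating subsystem together with the coupling parameters $k_s, c_s$ to $m_s$, and symmetrically that $\mathbf{A}_V(\omega), \mathbf{a}_V(\omega)$ contain only vibrating-subsystem quantities plus $k_{s+1}, c_{s+1}$. The selector vectors and $\mathbf{D}_a, \mathbf{D}_d$ depend only on integer indices, and $\vec f_d$ depends only on $\omega$ and $\bar f_d$.

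The conclusion I would highlight at the end is that, crucially, neither the absorber parameters $m_a, k_a, c_a$ nor the active control input $u(t)$ appear anywhere in these expressions: they have been absorbed into the single algebraic constraint $\vec x_s=0$ through $\vec f_a$, and eliminated by Proposition~\ref{proposition:total-absorber-force}. I do not expect a serious technical obstacle here, since both Propositions~\ref{proposition:total-absorber-force} and \ref{proposition:potential-elastic-energies-links} are already established under the invertibility hypotheses on $\mathbf{A}_R(\omega)$, $\mathbf{A}_V(\omega)$ and $\mathbf{a}_R\tran(\omega)\mathbf{A}_R^{-1}(\omega)\mathbf{D}_a$; the only thing to be careful about is to keep the bookkeeping of which entries of $\mathbf{M}, \mathbf{C}, \mathbf{K}$ end up in which block of the partitioned matrix \eqref{eq:subsystems-DSM}, so that the assertion ``only resonating and vibrating subsystem parameters enter'' is literally correct, including the coupling links to $m_s$ which belong to the respective subsystem boundaries.
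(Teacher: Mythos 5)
Your proposal is correct and follows essentially the same route as the paper, which obtains the theorem simply by combining Propositions~\ref{proposition:total-absorber-force} and \ref{proposition:potential-elastic-energies-links}: substituting the closed-form $\vec f_a$ into \eqref{eq:elastic-energies-resonating-vector-form} and inspecting which parameters appear in the resulting expressions. Your explicit substituted formula and the bookkeeping remark about the coupling terms $k_s, c_s, k_{s+1}, c_{s+1}$ in $\mathbf{a}_R(\omega)$ and $\mathbf{a}_V(\omega)$ merely spell out details the paper leaves implicit.
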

\begin{corollary}
    The elastic potential energies cannot be redistributed by the control authority of the resonator. The only way for its redistribution is via a structural optimization of both the vibrating and the resonating subsystems.
\end{corollary}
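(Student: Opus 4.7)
The plan is to combine Propositions~\ref{proposition:total-absorber-force} and \ref{proposition:potential-elastic-energies-links} and then argue by inspection that every quantity that survives on the right-hand side is independent of the absorber mass $m_a$, stiffness $k_a$, damping $c_a$, and of the control input $u(t)$. In other words, the theorem is essentially a bookkeeping corollary of the two propositions, so the main task is to make the dependencies of the matrices $\mathbf{A}_R,\mathbf{A}_V,\mathbf{a}_R,\mathbf{a}_V,\mathbf{D}_a,\mathbf{D}_d$ completely explicit.

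First I would handle the vibrating subsystem, which is the easier half. Proposition~\ref{proposition:potential-elastic-energies-links} writes $W_{j,\max}$ purely in terms of $k_j$, the selector $\mathbf{e}_{v,j}$, $\mathbf{A}_V^{-1}(\omega)$, $\mathbf{D}_d$ and $\vec f_d$. From the construction of $\mathbf{A}(\omega)$ in \eqref{eq:system-general-DSM} and its block decomposition in \eqref{eq:subsystems-DSM}, the submatrix $\mathbf{A}_V(\omega)$ involves only $\omega$ and the parameters $m_k,c_k,k_k$ for $k=s+1,\dots,d+1$; the selector $\mathbf{D}_d$ is purely combinatorial, and $\vec f_d$ depends only on $\omega$ and $\bar f_d$. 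This already settles the claim on the vibrating side, and in particular shows that the absorber is invisible there.

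Next I would treat the resonating subsystem, where the absorber could in principle enter only through $\vec f_a$ in \eqref{eq:elastic-energies-resonating-vector-form}. The key step is to invoke Proposition~\ref{proposition:total-absorber-force} and substitute either \eqref{eq:prop1-fa-normal-case} or \eqref{eq:prop1-fa-marginal-case} for $\vec f_a$. In both cases $\vec f_a$ is a linear function of $\vec f_d$ whose coefficients are built exclusively from $\mathbf{A}_R,\mathbf{A}_V,\mathbf{a}_R,\mathbf{a}_V,\mathbf{D}_a,\mathbf{D}_d$. A line-by-line check of the definitions preceding \eqref{eq:subsystems-Bd} shows that these matrices carry only the parameters of the two subsystems (together with the coupling links $k_s,c_s,k_{s+1},c_{s+1}$ to the target mass) and $\omega$, but no occurrence of $m_a,c_a,k_a$ or $u$. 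Plugging the resulting expression for $\vec f_a$ into \eqref{eq:elastic-energies-resonating-vector-form} therefore yields a formula depending only on $\omega$, $\bar f_d$ and the subsystem parameters, which is exactly the statement of the theorem.

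The only real subtlety, and what I would flag as the main obstacle, is semantic rather than computational: one must state unambiguously that the boundary stiffness and damping pairs $(k_s,c_s)$ and $(k_{s+1},c_{s+1})$ are regarded as belonging to the resonating and vibrating subsystems respectively, so that the phrase \emph{``parameters within the resonating and vibrating subsystems''} in the theorem is well defined. Once this convention is fixed, the corollary follows immediately: since the control authority of the resonator can only shape $u(t)$, and $u(t)$ enters the dynamics solely through the lumped link force $\vec f_a$ whose value is already pinned down by the constraint $\vec x_s=0$, no choice of control or of absorber parameters can alter the energies \eqref{eq:elastic-energies-resonating-vector-form}--\eqref{eq:elastic-energies-vibrating-vector-form}, leaving structural modification of the primary as the only admissible route to redistributing fatigue loading.
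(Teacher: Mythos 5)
Your argument is correct and follows essentially the same route as the paper, which presents the corollary as an immediate consequence of combining Propositions~\ref{proposition:total-absorber-force} and \ref{proposition:potential-elastic-energies-links}: since $\vec{f}_a$ is uniquely pinned down by the constraint $\vec{x}_s=0$ in terms of $\mathbf{A}_R,\mathbf{A}_V,\mathbf{a}_R,\mathbf{a}_V,\mathbf{D}_a,\mathbf{D}_d$ and $\vec{f}_d$, substituting it into the energy expressions leaves no dependence on $m_a,c_a,k_a$ or the feedback law. Your observation that this holds regardless of how $f_a$ is generated matches the paper's own remark preceding \cref{sec:dr-anaylisis-design}, so no gap remains.
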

\begin{remark}
    Expression \eqref{eq:prop1-fa-normal-case} and \eqref{eq:prop1-fa-marginal-case} can be written in the form
    \begin{equation}\label{eq:remark-generalized-theroem}
        \vec f_a=-\mathbf{G}_1(\jmath\omega)^{-1} \mathbf{G}_2(\jmath \omega) \vec f_d
    \end{equation}
     where $\mathbf{G}_1$ and $\mathbf{G}_2$ are the transfer functions from $\vec{f}_a$ and $\vec{f}_d$ to the position of the target $\vec{x}_s$. The expression \eqref{eq:remark-generalized-theroem} is also applicable to more complex topologies than linear chains, including topologies where a separation into resonating and vibrating subsystems is not possible. Consequently, \cref{remark:independence-of-fa-absorber-final-stage} remains valid in such cases.
\end{remark}
  
\subsection{Delayed resonator analysis and design}\label{sec:dr-anaylisis-design}
Let us note that both the above propositions hold true regardless of how the force $f_a$ is generated. In what follows we consider the traditional DR with position feedback. Though, the modification to DR with velocity or acceleration feedback is straightforward. With a reference to \cref{subfig:non-collocated-scheme-overall}, the dynamics of the absorber is governed by
 \begin{equation}\label{eq:resonator-dynamics-newton-prelim}
      m_a\ddot{x}_a(t)=f_a(t),
 \end{equation}
which in the phasor form reads as
 \begin{equation}\label{eq:resonator-vector-fa}
    - m_a\omega^2 \vec{x}_a = \vec{f}_a,
 \end{equation}
 and implies
 \begin{equation}\label{eq:resonator-vector-xa}
    \vec{x}_a = -\frac{1}{m_a\omega^2}\vec{f}_a.
\end{equation}
Dynamics of DR can be obtained by substituting \eqref{eq:force} into \eqref{eq:resonator-dynamics-newton-prelim}
 \begin{equation}\label{eq:resonator-dynamics-newton}
      m_a\ddot{x}_a(t) + c_a\dot{x}_a(t)+k_ax_a(t)=c_a\dot{x}_p(t)+k_ax_p(t)+u(t),
 \end{equation}
where the delayed position feedback is considered in the form
 \begin{equation}\label{eq:resonator-ut}
    u(t) = g x_a(t-\tau),
 \end{equation}
with tuneable gain $g$ and time delay $\tau \geq 0$.
\color{black}
\begin{proposition}
    \label{proposition:DR-parameters}
    The force $f_a$ given by \eqref{eq:prop1-fa-normal-case} needed for complete vibration suppression of the mass $m_s$ excited by the force $f_d$ given by \eqref{eq:disturbance-cosine-form} and acting at the mass $m_d$ can be generated by the DR \eqref{eq:resonator-dynamics-newton}--\eqref{eq:resonator-ut} with a parameter pair $g$, $\tau$ from the sets
  \begin{align}
     & g^{+} = |Q(\omega)|, \ \ \ \  \tau^{+}=\frac{1}{\omega}(-\arg{Q(\omega)} + 2k\pi), k \in Z,
     \label{eq:DRpar1} \\
     \label{eq:DRpar2}
     & g^{-} = -|Q(\omega)|, \ \ \tau^{-}=\frac{1}{\omega}(\pi-\arg{Q(\omega)} + 2k\pi), k \in Z,
  \end{align}
with
\begin{equation}
    Q(\omega) = - m_a\omega^2 \left[1 - (\jmath\omega c_a + k_a)\left(\mathbf{e}_{r,p}\tran \mathbf{A}_R^{-1}(\omega)\mathbf{D}_a + \frac{1}{m_a\omega^2}\right) \right],
    \label{eq:Q}
 \end{equation} 
provided that the overall setup is stable and feedback with the selected pair $g$, $\tau$ is realizable, i.e. $\tau \geq 0$. 
\end{proposition}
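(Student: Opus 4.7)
The plan is to derive a single complex algebraic equation relating the control parameters $g$ and $\tau$ to the required force $\vec f_a$ from \cref{proposition:total-absorber-force}, and then to split it into magnitude and argument conditions. First I would write the absorber force balance \eqref{eq:force} in phasor form, obtaining
\[
\vec f_a = \vec u + (k_a + \jmath\omega c_a)(\vec x_p - \vec x_a),
\]
and insert the phasor version of the delayed feedback \eqref{eq:resonator-ut}, namely $\vec u = g\, e^{-\jmath\omega\tau}\vec x_a$.

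Next I would express both $\vec x_a$ and $\vec x_p$ as multiples of $\vec f_a$. The absorber relation \eqref{eq:resonator-vector-xa} gives $\vec x_a = -\vec f_a/(m_a\omega^2)$, while the proof of \cref{proposition:total-absorber-force} (equation \eqref{eq:prop1-proof-solved-xr}) yields $\vec x_p = \mathbf{e}_{r,p}\tran \mathbf{A}_R^{-1}(\omega)\mathbf{D}_a\,\vec f_a$. Substituting these into the force balance cancels $\vec f_a$ on both sides, leaving the single scalar identity
\[
g\,e^{-\jmath\omega\tau} \;=\; -m_a\omega^2\!\left[1 - (k_a + \jmath\omega c_a)\!\left(\mathbf{e}_{r,p}\tran \mathbf{A}_R^{-1}(\omega)\mathbf{D}_a + \frac{1}{m_a\omega^2}\right)\right] \;=\; Q(\omega),
\]
which is precisely \eqref{eq:Q}. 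This reduction is essentially algebraic and should proceed without surprise.

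The substantive step is then to solve the complex equation $g\,e^{-\jmath\omega\tau}=Q(\omega)$ for the real unknowns $g\in\mathbb{R}$ and $\tau\geq 0$. I would split into two cases according to the sign of $g$. For $g>0$, equating moduli gives $g=|Q(\omega)|$ and equating arguments modulo $2\pi$ gives $-\omega\tau\equiv \arg Q(\omega)\pmod{2\pi}$, which yields the family \eqref{eq:DRpar1}. For $g<0$, writing $g\,e^{-\jmath\omega\tau} = |g|\,e^{\jmath(\pi-\omega\tau)}$ gives $g=-|Q(\omega)|$ together with $\pi-\omega\tau\equiv \arg Q(\omega)\pmod{2\pi}$, producing \eqref{eq:DRpar2}. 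Finally I would note that, since $Q(\omega)$ is determined solely by $\omega$ and the passive parameters, each set is a countable family parameterized by $k\in\mathbb{Z}$, and the admissibility condition $\tau\geq 0$ selects the physically realizable members, while closed-loop stability must be verified separately as stated in the proposition.

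The main obstacle I anticipate is bookkeeping rather than conceptual: keeping the role of $\vec x_p$ consistent with the resonating-subsystem indexing used in \cref{proposition:potential-elastic-energies-links} (so that $\mathbf{e}_{r,p}$ correctly picks out the $p$-th mass) and making sure that the $k_a$, $c_a$ coupling terms and the absorber inertia term combine to the exact bracket in \eqref{eq:Q}. Once this is done cleanly, the branch analysis is routine.
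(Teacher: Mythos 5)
Your proposal is correct and follows essentially the same route as the paper: both express $\vec x_a$ and $\vec x_p$ as multiples of $\vec f_a$ via \eqref{eq:resonator-vector-xa} and \eqref{eq:prop1-proof-solved-xr}, substitute into the phasor form of the absorber dynamics and delayed feedback to obtain the scalar identity $g\,e^{-\jmath\omega\tau}=Q(\omega)$, and then read off the two branches from magnitude and argument. The only cosmetic difference is that the paper equates two separate expressions for $\vec u$ rather than cancelling $\vec f_a$ in a single equation, and your explicit sign-of-$g$ case split is slightly more detailed than the paper's one-line conclusion.
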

\begin{proof}
 Both \eqref{eq:resonator-dynamics-newton} and \eqref{eq:resonator-ut} can be expressed in the phasor form as
 \begin{align}
     & - m_a \omega^2 \vec{x}_a = \left(\jmath\omega c_a + k_a\right) \left(\vec{x}_p - \vec{x}_a\right) + \vec{u}, \label{eq:resonator-vector-absorber} \\
     & \vec{u} = g e^{-\jmath\omega\tau} \vec{x}_a. \label{eq:resonator-vector-u-simple}
 \end{align}
From \eqref{eq:prop1-proof-solved-xr} we can express the phasor of position of mass $m_p$ as
 \begin{equation}\label{eq:resonator-vector-xp}
     \vec{x}_p = \mathbf{e}_{r,p}\tran \mathbf{A}_R^{-1}(\omega)\mathbf{D}_a \vec{f}_a,
 \end{equation}
 where $\mathbf{e}_{r,p}\tran=\begin{bmatrix}  \mathbf{o}_{p-1}\tran & 1 & \mathbf{o}_{s-p}\tran\end{bmatrix}$.
  Substituting \eqref{eq:resonator-vector-fa}, \eqref{eq:resonator-vector-xa} and \eqref{eq:resonator-vector-xp} into \eqref{eq:resonator-vector-absorber} and expressing $\vec{u}$ as a function of $\vec{f}_a$ gives
 \begin{equation}\label{eq:resonator-ut-function-fat}
    \vec{u} = \left[1 - (\jmath\omega c_a + k_a)\left(\mathbf{e}_{r,p}\tran  \mathbf{A}_R^{-1}(\omega)\mathbf{D}_a + \frac{1}{m_a\omega^2}\right) \right] \vec{f}_a.
 \end{equation}
 Substituting \eqref{eq:resonator-vector-xa} to \eqref{eq:resonator-vector-u-simple} yields
 \begin{equation}\label{eq:resonator-ut-function-g-tau}
    \vec{u} = - \frac{g}{m_a\omega^2} e^{-\jmath\omega\tau}\vec{f}_a.
 \end{equation}
 Comparing \eqref{eq:resonator-ut-function-fat} and \eqref{eq:resonator-ut-function-g-tau} with both sides multiplied by $-m_a\omega^2$ then gives
 \begin{equation}\label{eq:resonator-complex-g-tau-equation}
    g e^{-\jmath\omega\tau} = Q(\omega),
 \end{equation}
 where $Q(\omega)$ is of a form \eqref{eq:Q} given in proposition. Comparing the magnitude and the argument of both sides of \eqref{eq:resonator-complex-g-tau-equation} yields parameters of the DR.
 \end{proof}
\color{black}
  
In \cite{vyhlidal2019analysis}, it was derived for the collocated deployment of the DR that the best stability posture and robustness against nominal and true frequency mismatch is achieved for the first delay branch. Analogous result can be expected for the non-collocated case under consideration. Therefore, as a rule of thumb, the pair with the lowest positive delay $\tau$ is selected.

Similarly to the preceding section, the potential risk of fatigue in the DR's link is quantified by the maximal elastic potential energy between the system mass $m_p$ and the absorber mass $m_a$ at the stage of equilibrium oscillations. Using \eqref{eq:elastic-energies-general-timedomain} and substituting the positions $x_a$ and $x_p$ yields
 \begin{equation}\label{eq:elastic-energies-vector-form-absorber}
     W_a(t) = \Bar{W}_{a, \mean} + \Re\left\{ \vec{W}_a e^{\jmath 2\omega t} \right\},
 \end{equation}
 where, assuming $\Delta \vec{x}_{a, p} = \vec{x}_a - \vec{x}_{p}$, time independent average is defined as
 \begin{equation}\label{eq:elastic-energy-phasor-mean-absorber}
     \Bar{W}_{a, \mean} = \frac{1}{4}k_a \left|\Delta \vec{x}_{a, p}\right|^2,
 \end{equation}
 and phasor as
 \begin{equation}\label{eq:elastic-energy-phasor-time-absorber}
     \vec{W}_{a} = \frac{1}{4}k_a \left(\Delta\vec{x}_{a, p}\right)^2,
 \end{equation}
 with maximal value
 \begin{align}
     W_{a, \max} &= \Bar{W}_{a, \mean} + |\vec{W}_{a}|, \label{eq:elastic-energy-maxima-absorber} \\
     &= \frac{1}{2}k_a\left|\left( -\frac{1}{m_a \omega^2} - \mathbf{e}\tran_{r,p}\mathbf{A}_R^{-1}(\omega) \mathbf{D}_a\right) \vec{f}_a \right|^2. \label{eq:potential-elastic-energy-absorber}
 \end{align}

\subsubsection{Actuation power assessment needed for full vibration suppression}
 The actuation power at the DR's control feedback is given by
 \begin{equation}
     p(t)=u(t)(\dot x_p(t)-\dot x_a (t)),
     \label{eq:power}
 \end{equation}
 which can not be represented by phasor, however can be rewritten to
 \begin{equation}
     p(t) = \bar{p}_{\mean} + \Re \{ \vec{p} e^{\jmath 2\omega t}\},
 \end{equation}
 where the first part of right hand side 
 \begin{equation}\label{eq:resonator-ut-phasor-time-average}
     \bar{p}_{\mean} = \frac{1}{2} \Re \{ \jmath \omega (\vec{u})^{*} \left(\vec{x}_p - \vec{x}_a\right)\}                                            
 \end{equation}
 is independent on time and determines the average value of the power, and
 \begin{equation}\label{eq:resonator-ut-phasor}
     \vec{p} = \frac{1}{2} \jmath \omega \vec{u} \left(\vec{x}_p - \vec{x}_a\right)
 \end{equation}
 oscillates with a frequency $2\omega$ with an amplitude 
 \begin{equation}
     \bar{p} = \frac{1}{2} \omega \left|\vec{u} \left(\vec{x}_p - \vec{x}_a\right) \right|.
 \end{equation}
 Thus, the maximum power needed in the steady oscillation stage is  given by 
 \begin{equation}\label{eq:resonator-actuation-power-PMAX}
     P_{\max} = \max \left\{|\bar{p}+\bar{p}_{\mean}|,|\bar{p}-\bar{p}_{\mean}|\right\}.
 \end{equation}
 
\subsection{Stability analysis}
 The necessary condition for achieving the entire vibration suppression at the mass $m_s$ is the stability of the overall system consisting of the series of linked masses and the DR. The stability of the whole system can be checked by determining the rightmost characteristic roots of the system, e.g. by MATLAB package \emph{TDS-CONTROL} \cite{appeltans2022tds}, \cite{appeltans2023analysis}. For that purpose, the system model is to be turned to the set of Delay Differential Algebraic
Equations (DDAE). First, redefine the state vector of \eqref{eq:system-general-MCK} as  
 \begin{equation}\label{eq:stability-overline-x-vector}
    \overline{\mathbf{x}}(t) = \begin{bmatrix} \mathbf{x}(t)\tran & \dot{\mathbf{x}}(t)\tran \end{bmatrix}\tran,
 \end{equation}
 and rewrite \eqref{eq:system-general-MCK} to
 \begin{equation}\label{eq:stability:extended-x-system}
     \begin{bmatrix}
         \mathbf{I} & \mathbf{0} \\
         \mathbf{0} & \mathbf{M}         
     \end{bmatrix} \dot{\overline{\mathbf{x}}}(t)
     =
     \begin{bmatrix}
         \mathbf{0} & \mathbf{I} \\
         -\mathbf{K} & -\mathbf{C}         
     \end{bmatrix} \overline{\mathbf{x}}(t)
     +
     \begin{bmatrix}
         \mathbf{o}_{d} \\ \mathbf{B}_a
     \end{bmatrix} f_a(t)
     +
     \begin{bmatrix}
         \mathbf{o}_{d} \\ \mathbf{B}_d
     \end{bmatrix} f_d(t).
 \end{equation}
 Analogously, define the state vector of the absorber as \begin{equation}\label{eq:stability-overline-xa-vector}
   \overline{\mathbf{x}}_a(t) = \begin{bmatrix} x_a(t) & \dot{x}_a(t) \end{bmatrix}\tran,
 \end{equation}
 and rewrite \eqref{eq:resonator-dynamics-newton-prelim} to 
 \begin{equation}\label{eq:stability:extended-xa-absorber}
     \begin{bmatrix}
         1 & 0 \\
         0 & m_a
     \end{bmatrix} \dot{\overline{\mathbf{x}}}_a(t)
    =
     \begin{bmatrix}
         0 & 1 \\
         0 & 0
     \end{bmatrix} \overline{\mathbf{x}}_a(t)
     +
     \begin{bmatrix} 0 \\ 1 \end{bmatrix} f_a(t).
 \end{equation}
 Further on, equation  \eqref{eq:force} can be turned to the form 
 \begin{equation}\label{eq:stability:total-force-fa}
     0 = \begin{bmatrix} k_a \mathbf{e}_p\tran & c_a \mathbf{e}_p\tran \end{bmatrix} \overline{\mathbf{x}}(t) + \begin{bmatrix}-k_a & -c_a  \end{bmatrix}\overline{\mathbf{x}}_a(t) + u(t) - f_a(t),
 \end{equation}
 and the control law \eqref{eq:resonator-ut} can be turned to
 \begin{equation}\label{eq:stability:control-law}
     0 = g  \begin{bmatrix} 1 & 0 \end{bmatrix} \overline{\mathbf{x}}_a(t-\tau) - u(t).
 \end{equation}

Defining the extended state vector
 \begin{equation}
    \Tilde{\mathbf{x}}(t) = \begin{bmatrix} \mathbf{x}\tran(t) & \dot{\mathbf{x}}\tran(t) & x_a(t) &  \dot{x}_a(t) & f_a(t) & u(t)  \end{bmatrix}\tran,
 \end{equation}
 the DDAE model for \eqref{eq:stability:extended-x-system}, \eqref{eq:stability:extended-xa-absorber}, \eqref{eq:stability:total-force-fa} and \eqref{eq:stability:control-law} is as follows,
 \begin{equation}\label{eq:stability:DDAE}
    \mathcal{E}\dot{\Tilde{\mathbf{x}}}(t) = \mathcal{A}_0 \Tilde{\mathbf{x}}(t) + \mathcal{A}_1 \Tilde{\mathbf{x}}(t-\tau) + \mathcal{B}f_d(t),
 \end{equation}
 where
 \begin{equation*}
     \mathcal{E} = \text{diag} (\mathbf{I}, \mathbf{M}, 1, m_a, 0, 0),\ \mathcal{B}
    =
    \begin{bmatrix}
        \mathbf{o}_{d}\tran & \mathbf{B}_d\tran & 0 & 0 & 0 & 0
    \end{bmatrix}\tran,
 \end{equation*}
 \begin{equation*}
    \mathcal{A}_0
     =
     \begin{bmatrix}
         \mathbf{0} & \mathbf{I} & \mathbf{o}_{d} & \mathbf{o}_{d} & \mathbf{o}_{d} & \mathbf{o}_{d}\\
         -\mathbf{K} & -\mathbf{C} & \mathbf{o}_{d} & \mathbf{o}_{d} & \mathbf{B}_a & \mathbf{o}_{d}\\
         \mathbf{o}_{d}\tran & \mathbf{o}_{d}\tran & 0 & 1 & 0 & 0  \\
         \mathbf{o}_{d}\tran & \mathbf{o}_{d}\tran & 0 & 0 & -1 & 0\\
         k_a \mathbf{e}_p\tran & c_a \mathbf{e}_p\tran & -k_a & -c_a & -1 & 1\\
         \mathbf{o}_{d}\tran & \mathbf{o}_{d}\tran & 0 & 0 & 0 & -1\\
     \end{bmatrix}, \ 
     \mathcal{A}_1
     =
     \begin{bmatrix}
         \mathbf{0} & \mathbf{0} & \mathbf{o}_{d} & \mathbf{o}_{d} & \mathbf{o}_{d} & \mathbf{o}_{d}\\
         \mathbf{0} & \mathbf{0} & \mathbf{o}_{d} & \mathbf{o}_{d} & \mathbf{o}_{d} & \mathbf{o}_{d}\\
         \mathbf{o}_{d}\tran & \mathbf{o}_{d}\tran & 0 & 0 & 0 & 0  \\
         \mathbf{o}_{d}\tran & \mathbf{o}_{d}\tran & 0 & 0 & 0 & 0\\
         \mathbf{o}_{d}\tran & \mathbf{o}_{d}\tran & 0 & 0 & 0 & 0\\
         \mathbf{o}_{d}\tran & \mathbf{o}_{d}\tran & g & 0 & 0 & 0\\
     \end{bmatrix}.
 \end{equation*}
 
 The dynamic behavior of the controlled system \eqref{eq:stability:DDAE} is governed by
 \begin{equation}\label{eq:stability:DDAE-determinant}
     P(\lambda) = \det \left( \mathcal{E} \lambda - \mathcal{A}_0 - \mathcal{A}_1 e^{-\lambda \tau} \right),
 \end{equation}
which in general has infinitely many roots. The stability is implied by the location of all the roots in the left hand side of complex plane. Such a condition can be defined in a compact way via introducing the \emph{spectral abscissa}
\begin{equation}\label{eq:stability:spectral-abscissa}
    \alpha = \max_{\lambda \in \mathbb{C}} \{ \Re \left(\lambda \right): \ P(\lambda)=0\},
\end{equation}
and requiring $\alpha < 0$.

\section{Simultaneous structural optimization and control design}
\label{section:composed-structural-optimization}

As it results from the above analysis, with a reference to \cref{fig:non-collocated-scheme}, once the system is being excited by a harmonic force \eqref{eq:disturbance-cosine-form} acting on the mass $m_d$ and the mass $m_s$ is to be stopped, there exists a unique control force $f_a$ acting at the mass $m_p$ given in \cref{proposition:total-absorber-force}. Thus, as stated in \cref{remark:independence-of-fa-absorber-final-stage}, the motion of masses in both the resonating and vibrating subsystems cannot be influenced by the control authority and they are fully determined by the structural parameters of the setup itself. Consequently, the only way of modifying the potential energy in the links is via the adjustment of parameters $m_i,\ i=1, \dots, d,\ i\ne s$, and $c_i,\ k_i,\ i=1, \dots, d+1$. 

Besides, considering the two-parameter control rule \eqref{eq:resonator-ut}, the parameters of which are by \cref{proposition:DR-parameters}, no degrees of freedom in the DR feedback is left for tuning the closed-loop dynamics. Unlike in \cite{PilbauerRobust},  
\cite{silm_2024_spectral_design}, \cite{Saldanha2023}, \cite{kuvre2018spectral}, where either the set of free parameters in the DR feedback was enlarged or a supplementary stabilizing controller was deployed, we stick to the simple structure DR position feedback \eqref{eq:resonator-ut}, which is easy to implement and tune in practice. Thus, the only way to ensure the stability of the considered setting is to include the stability condition in the structural parameter optimization.
Also note that the stability constraint would be analogous if we considered DR velocity feedback instead of the DR position feedback. In both cases, the DDAE model would be of retarded type. Considering the acceleration feedback, however, would lead to neutral DDAE. Additionally to standard stability check of the DDE, the strong stability check of the associated difference equation would be needed. 

An additional quantity to follow is the power needed for beating the vibration at the non-collocated mass $m_s$. As it results from \eqref{eq:power}, it is determined by a product of velocity difference of masses $m_p$ and $m_a$ and the actuation force of the DR given by \eqref{eq:resonator-ut}. Hence, it can be influenced by the parameters of the whole resonating substructure, consisting of the resonating subsystem with masses $m_i,\ i=1, \dots, s-1$ and the DR with mass $m_a$.  
Additional constraints may arise from the limits on the construction parameters, and safety and durability of the DR which may be more relaxed compared to those characteristics of the system.

To sum up the discussion above, the following objectives and constraints are to be followed in the structural parameter optimization:
\begin{enumerate}
    \item {\em Full vibration suppression at the non-collocated mass $m_s$}. This task is directly fulfilled by the equality constraints defined in Proposition~\ref{proposition:DR-parameters} and taking the pair of parameters $g, \tau$ with the smallest positive delay $\tau$.
    \item {\em Minimizing the risk of fatigue across the system}. The fatigue is quantified via a potential energy in the system links derived in Proposition~\ref{proposition:potential-elastic-energies-links}. The objective is to minimize this quantity at the most risky link, which leads to the solution of the min-max problem:
    \begin{equation}\label{eq:optimization:criterion-maxW}
        \min\{\max\{ W_{i, \max}: \ 1\leq i\leq d \}\}.
    \end{equation}
    \item {\em Minimizing the maximum of the power needed for the vibration suppression}. This task directly leads to
    \begin{equation}\label{eq:optimization:criterion-maxP}
        \min\{P_{\max}\}, 
    \end{equation}
    where $P_{\max}$ is given by \eqref{eq:resonator-actuation-power-PMAX}.
    \item {\em Stability of the overall system}. This can be achieved by considering the constraint
    \begin{equation}\label{eq:constraint-abscissa-stable}
        \alpha \leq \xi_{\alpha}
    \end{equation}
    where the spectral abscissa $\alpha$ of the DDAE system \eqref{eq:stability:DDAE} is given by \eqref{eq:stability:spectral-abscissa}, and $\xi_{\alpha}<0$, is a selected distance from the stability boundary, imposing a desired stability margin.
    \item {\em Physical parameter constraints.} Limits on the selected parameters to be optimized (mass, damping and stiffness of the links).
    \item {\em A DR safety and durability constraint.} Assessing the fatigue in the absorber's link in terms of potential energy \eqref{eq:potential-elastic-energy-absorber}, it leads to the condition
    \begin{equation}\label{eq:constraint-max-Wa}
      W_{a, \max} \leq \xi_a  
    \end{equation}
    where $\xi_a$ is the predefined limit. 
\end{enumerate}
The above considerations lead us to the optimization problem formulation 
  \begin{mini!}|l|[2]
    {g, \tau, \mathbf{\theta}}{ \gamma \frac{\max\{ W_{i, \max}(\theta):\ i=1,\dots,d \}}{W_{\nom}}  + (1-\gamma) \frac{P_{\max}(g, \tau, \theta)}{P_{\nom}}} 
    {}{\label{mini:multiobject-minimisation-problem}}
    \addConstraint{\mathbf{A}\mathbf{\theta} \leq \mathbf{b} \label{eq:constraint:Ax<=b}}
    \addConstraint{\alpha (g, \tau, \theta)\leq \xi_{\alpha} \label{eq:constraint:abscissa-stable}}
    \addConstraint{W_{a, \max}(g, \tau, \mathbf{\theta}) \leq \xi_a \label{eq:constraint:Wa-maximum}}
    \addConstraint{g, \tau \text{ fullfill Proposition \ref{proposition:DR-parameters}}, \label{eq:constraint:gtau-fullfills-prop}}
  \end{mini!}
 where $g, \tau$ are control law parameters and $\mathbf{\theta}$ is the structural parameter set. The parameter $\gamma \in \langle0,1\rangle$ allows us to balance the two objectives \eqref{eq:optimization:criterion-maxW} and \eqref{eq:optimization:criterion-maxP} according to user-defined priority. Note that the two objectives are to be normalized with respect to nominal values $P_{\nom}$ and $W_{\nom}$. The linear inequality constraints \eqref{eq:constraint:Ax<=b} represent physical feasibility limits such as positivity of all parameters and their lower and upper bounds, and possible other specific structural requirements. An example of the latter is a restriction placed on absorber mass $m_a$ to be lower than mass $m_p$. The stability margin of the overall setup is represented via \eqref{eq:constraint:abscissa-stable}, which is a nonlinear inequality constraint formed in \eqref{eq:constraint-abscissa-stable}. 
 The nonlinear inequality constraint \eqref{eq:constraint:Wa-maximum} poses a limit on maximum of elastic potential energy $W_{a, \max}$ in DR's link, representing the maximum possible load we can redirect from the structure to the absorber. Finally, constraint \eqref{eq:constraint:gtau-fullfills-prop} ensures full vibration suppression of mass $m_s$.

 Note that variables $g$ and $\tau$ and constraint \eqref{eq:constraint:gtau-fullfills-prop} can be eliminated, using \cref{proposition:DR-parameters}. This is due to the fact that $g$ and $\tau$ are determined from structural parameters including those in $\mathbf{\theta}$, and thus they are not free. In fact, by choosing the solution (\ref{eq:DRpar1})--(\ref{eq:DRpar2}) with smallest positive delay, they are uniquely determmined by $\mathbf{\theta}$.  Therefore, solving the above optimization task can be rephrased as
 \begin{mini!}|l|[2]
    {\mathbf{\theta}}{ \gamma \frac{\max\{ W_{i, \max}(\theta):\ i=1, \dots, d \}}{W_{\nom}}  + (1-\gamma) \frac{P_{\max}(\theta)}{P_{\nom}}} 
    {}{\label{mini:multiobject-minimisation-problem-transformed}}
    \addConstraint{\mathbf{A}\mathbf{\theta} \leq \mathbf{b} \label{eq:constraint2:Ax<=b}}
    \addConstraint{\alpha (\theta)\leq \xi_{\alpha} \label{eq:constraint2:abscissa-stable}}
    \addConstraint{W_{a, \max}(\mathbf{\theta}) \leq \xi_a \label{eq:constraint2:Wa-maximum}.}
  \end{mini!}
  Since the objective function \eqref{mini:multiobject-minimisation-problem-transformed}, as well as some of the constraint functions, are in general nonconvex and nonsmooth, makes this optimization problem a suitable candidate for an application of the software \emph{GRANSO}  \cite{curtis2017bfgs}, which implements a BFGS-SQP method for solving nonsmooth, nonconvex, constrained optimization problems. The application of this method will be demonstrated in the subsequent numerical case study. 
 \color{black}
 Before that, we present experimental case study, where we demonstrate the possibility to improve system characteristics towards non-collocated vibration absorption considering two-parameter set for which they can well be mapped and no optimization routine is needed.  
\color{black}
\section{Structural analysis and optimization of experimental setup}
\label{section:case-study-experimental}
 In order to explain
the mechanisms by which the optimized parameters influence the optimization objectives, we analyse and optimize parameters of the experimental set-up depicted in Fig. \ref{fig:experimental-setup}. It consists of flexibly linked three masses with configuration $p=1$, $s=2$ and $d=3$, meaning that the absorber is deployed at the mass $m_1$ and the objective is to suppress vibration of the middle mass $m_2$. The carts $m_1$, $m_2$ and $m_3$ are attached by industrial ball bearings to the rail fixed on the frame. The absorber $m_a$ is mounted directly on a cart $m_1$ where it slides on a smaller rail. All carts are flexibly interconnected by springs. A multi-pole magnetic strip is installed on the setup frame to measure the displacement of the carts. The Moticont LVCM-032-076-20 voicecoil is used as the DR to exert the force input $u(t)$, while Akribys AVM40-20-0.5 voicecoil is used to generate the disturbance harmonic force $f(t)$. 
The control algorithms and instrumentation are implemented in LabVIEW\textsuperscript{\texttrademark} 2021 and are executed on the NI compactRIO 9064 industrial control system from National Instruments with a sampling rate of \SI{1}{kHz}. More detailed description of the setup hardware can be found in \cite{silm_2024_spectral_design}, \cite{silm_2024_spectral_design}, and \cite{kure2024robust}, where, in different configurations, it was used for validating different vibration suppression problems. 

The harmonic force \eqref{eq:disturbance-cosine-form} acting at mass $m_3$ is characterized by $\bar f_d=\SI{2}{N}$ and frequency $\omega=\SI{26.389}{s^{-1}} (\SI{4.2}{Hz})$.
 The nominal parameters of the setup structure are assumed as: $m_1=\SI{1.49}{kg}$, $m_2=\SI{0.509}{kg}$, $m_3=\SI{0.687}{kg}$, $m_a=\SI{0.42}{kg}$, $k_1=\SI{1001}{Nm^{-1}}$, $k_2=\SI{749}{Nm^{-1}}$, $k_3=\SI{711}{Nm^{-1}}$, $k_4=\SI{950}{Nm^{-1}}$, $k_a=\SI{407}{Nm^{-1}}$, $c_1=\SI{4.35}{Nsm^{-1}}$, $c_2=\SI{0.85}{Nsm^{-1}}$, $c_3=\SI{1.85}{Nsm^{-1}}$, $c_4=\SI{4.95}{Nsm^{-1}}$, $c_a=\SI{1.8}{Nsm^{-1}}$.
 \begin{figure}[t]
    \centering
    \includegraphics[width=\textwidth]{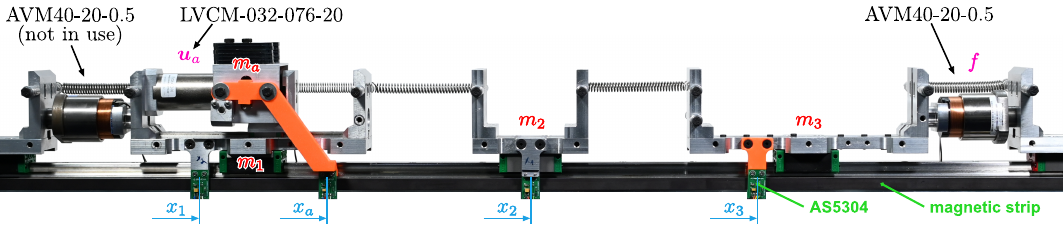}
    \caption{\textcolor{black}{Experimental setup with three flexibly linked carts actuated by the voice coils}}
    \label{fig:experimental-setup}
 \end{figure}
 Considering the parameters of the decomposed model \eqref{eq:prop1-proof-resonating}-\eqref{eq:prop1-proof-vibrating}, we have $\mathbf{A}_R=-\omega^2m_1 + k_1+k_2 +\jmath\omega(c_1+c_2)$, $\mathbf{a}_R=-k_2 -\jmath\omega c_2$, $a_{s,s} = -\omega^2m_2 + k_2+k_3 +\jmath\omega(c_2+c_3)$, $\mathbf{A}_V=-\omega^2m_3 + k_3+k_4 +\jmath\omega(c_3+c_4)$, $\mathbf{a}_V=-k_3 -\jmath\omega c_3$, $\mathbf{D}_a=1$ and $\mathbf{D}_d=1$.

 Regarding the vector of decisive parameters, we opt for adjusting the masses as the change is easily measurable and can be done by either adding steel plates on the carts or removing them. Before selection of masses to be optimize, the structural analysis is performed.
 When the vibration of $m_2$ is fully suppressed, the total force acting on absorber \eqref{eq:prop1-fa-normal-case} yields
 \begin{equation}
     \vec{f}_a = -\frac{(-k_3 - c_3\jmath\omega)(-m_1\omega^2 +k_1 + k_2 +(c_1+c_2)\jmath\omega)}{(-k_2 - c_2\jmath\omega)(-m_3\omega^2 + k_3 + k_4 +(c_3+c_4)\jmath\omega)} \vec{f}_d.
 \end{equation}
 Expressing  
 \begin{align}
     \vec{x}_1 &= \frac{(-k_3 - c_3\jmath\omega)}{(-k_2 - c_2\jmath\omega)(-m_3\omega^2 + k_3 + k_4 +(c_3+c_4)\jmath\omega)} \vec{f}_d,\\
     \vec{x}_3 &= -\frac{1}{-m_3\omega^2 + k_3 + k_4 +(c_3+c_4)\jmath\omega} \vec{f}_d.
 \end{align}
from \eqref{eq:prop1-proof-solved-xr} and \eqref{eq:prop1-proof-solved-xv}, respectively, and substituting them to  \eqref{eq:fatique-loading:Wimax-general} the maxima of the structural potential elastic energies in the links are given as
  \begin{align}
      W_{1,max} &= \frac{1}{2} k_1 \frac{(k_3^2 + c_3^2\omega^2)}{(k_2^2 + c_2^2\omega^2)((-m_3\omega^2 + k_3 + k_4)^2 + (c_3+c_4)^2\omega^2)} |\vec{f}_d|^2, \\
      W_{2,max} &= \frac{1}{2} k_2 \frac{(k_3^2 + c_3^2\omega^2)}{(k_2^2 + c_2^2\omega^2)((-m_3\omega^2 + k_3 + k_4)^2 + (c_3+c_4)^2\omega^2)} |\vec{f}_d|^2, \\
      W_{3,max} &= \frac{1}{2} k_3 \frac{1}{(-m_3\omega^2 + k_3 + k_4)^2 + (c_3+c_4)^2\omega^2} |\vec{f}_d|^2,\\
      W_{4,max} &= \frac{1}{2} k_4 \frac{1}{(-m_3\omega^2 + k_3 + k_4)^2 + (c_3+c_4)^2\omega^2} |\vec{f}_d|^2.
  \end{align}
 As can be seen, the only mass which affects their value is $m_3$, which will be considered as the first decisive variable for the analysis.  
The additional quantity to be observed is the maximum of potential energy at the absorber $W_{a,max}$, given by \eqref{eq:potential-elastic-energy-absorber}. Expressing
 \begin{equation}
     \vec{x}_a = \frac{(-k_3 - c_3\jmath\omega)(-m_1\omega^2 +k_1 + k_2 +(c_1+c_2)\jmath\omega)}{(m_a\omega^2)(-k_2 - c_2\jmath\omega)(-m_3\omega^2 + k_3 + k_4 +(c_3+c_4)\jmath\omega)} \vec{f}_d,
 \end{equation}
form \eqref{eq:resonator-vector-xa}, it is clear that due to dependence of $W_{a,max}$ on $\Delta\vec{x}_{a,p}=\vec{x}_a-\vec{x}_p$, it depends not only at $m_3$, but also on $m_a$ and $m_1$. This is also the case for $P_{max}$ given by \eqref{eq:resonator-actuation-power-PMAX} depending on the product of force 
 \begin{equation}
\vec{u} = -m_a\omega^2 - (\jmath\omega c_a +k_a)(\vec{x}_p - \vec{x}_a)
\end{equation}
expressed from \eqref{eq:resonator-vector-absorber} and phasor difference of velocities $\jmath\omega(\vec{x}_p - \vec{x}_a)$. Note that full expressions for $W_{a,max}$ and $P_{max}$ are omitted due to their extensive lengths. Due to visualization and demonstration purposes, we select $m_a$ as the second decisive variable, while $m_1$ is considered as fixed, implying $\theta=[m_a, m_3]\tran$. Note also that all the masses play role in evaluating the spectral abscissa \eqref{eq:stability:spectral-abscissa} under the active delayed feedback \eqref{eq:resonator-ut} with parameters given by \eqref{eq:DRpar1}. 
 \begin{table}[t]
 \color{black}
      \centering
      \begin{tabular}{cc|cccc}
        variable & units & lower bound & upper bound & nominal & optimized \\
        $\theta$ & & $\theta^{\mathrm{lb}}$ & $\theta^{\mathrm{ub}}$ & $\theta^{0}$ & $\theta^{*}$ \\
        \hline
           $m_a$ & $\SI{}{kg}$ &0.220 & 0.620 & 0.420 & 0.520 \\
           $m_3$ & $\SI{}{kg}$ &0.705 & 1.205 & 1.110 & 0.705 \\
      \end{tabular}
      \caption{\textcolor{black}{Nominal and optimized values of decision variables $\theta$, together with lower and upper bounds.}}
      \label{tab:experiment-decision-variables}
 \end{table}
 \begin{table}[t]
 \color{black}
\centering
     \begin{tabular}{l|ll|ll|ll}
          & \multicolumn{2}{c|}{Objectives} & \multicolumn{2}{c|}{Inequality constraints} & \multicolumn{2}{c}{Control law} \\
          \multicolumn{1}{c|}{setup} &  $\max\limits_{i}W_{i,\max}~\text{[J]}$ & $P_{\max}~\text{[W]}$ & $\alpha~\text{[}\SI{}{s^{-1}}\text{]}$ &  $W_{a, \max}~\text{[J]}$ & $g~\text{[}\SI{}{Nm^{-1}}\text{]}$ &  $\tau~\text{[s]}$ \\ \hline
          nominal & 0.00232 & 0.03129 & -0.62415 & 0.00205 & -78.05282 & 0.03303 \\
          optimized & 0.00136 & 0.01855 & -0.56855 & 0.00058 & -170.99583 & 0.01421 \\
     \end{tabular}
     \caption{\textcolor{black}{Comparison of objective function values, and inequality constraints. Results presented for both nominal and optimized parameters.}}
     \label{tab:experiment-criterion}
\end{table}
 \begin{figure}[ht!]
    \centering
    \includegraphics{./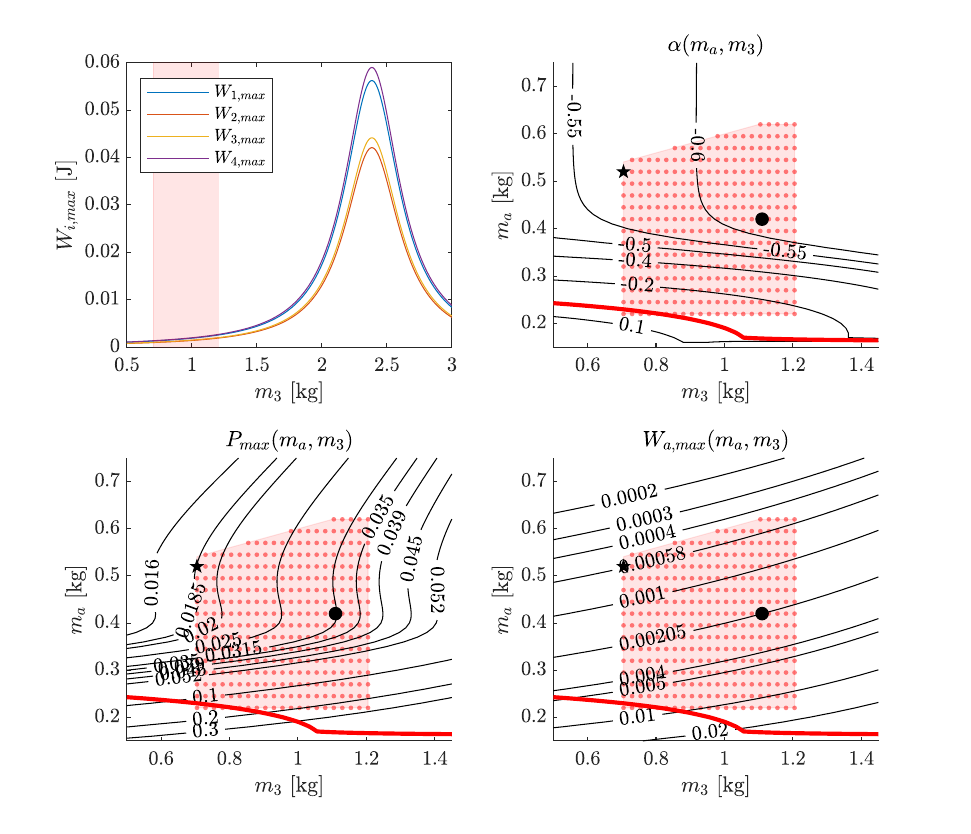}
    \caption{\textcolor{black}{Left column --- objectives, maxima of potential elastic energies in links $W_{i, max}$ as a function of $m_3$ (top), maxima of power as a function of $m_3$ and $m_a$. Right column --- non linear inequalities as a function of $m_3$ and $m_a$, spectral abscissa (top) and maxima of potential elastic energy in absorber link (bottom). Red polygons define structural constraints (linear inequalities), small red dots represent considered grid with $\SI{0.025}{kg}$ steps. Thick red line represents stability boundary, i.e. $\alpha(m_3, m_a) = 0$. Plane graphs contain marker for nominal (circle) and optimized (star) solutions.}}
    \label{fig:experiment-criterion}
 \end{figure}
 \begin{figure}[ht!]
    \centering
    \includegraphics{./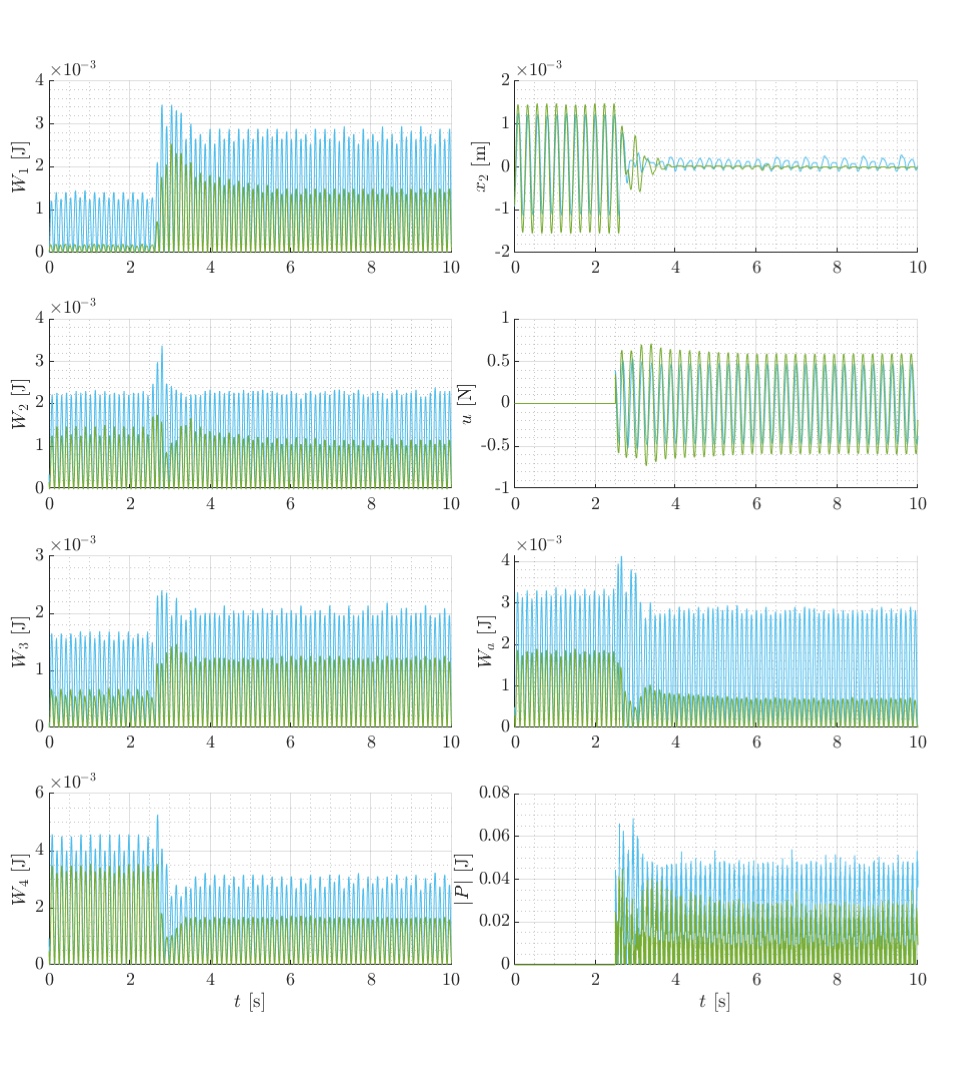}
    \caption{\textcolor{black}{Left --- Measured transients of potential elastic energies in links $i=1, \dots, 4$ Right --- Measured transients (from top to bottom): position of target mass $m_2$, control force $u$, potential elastic energy in absorber link $W_a$, absolute value of power $|P$. For all plots: Nominal parameters (blue) and optimized parameters (green). Passive regime for $t\in[0, 2.5)\, \mathrm{s}$; active regime $t\in[2.5, 10]\, \mathrm{s}$.}}
    \label{fig:experiment-elastic-energies-absorber}
\end{figure}

The analysis of maxima of potential energies $W_{i,max}, i=1..4$ and $W_{a,max}$, spectral abscissa $\alpha$, and power maximum $P_{max}$ are shown in Fig. \ref{fig:experiment-criterion}. The results were obtained numerically by evaluating the observed quantities at a dense grid and using the plotting functionalities of Matlab. As can be seen in the top-left sub-figure, the dependence of all $W_{i,max}, i=1..4$ on $m_3$ is substantial, with maxima achieved for $m_3=\SI{2.385}{kg}$. Relatively large decay can be observed for both higher and smaller values of $m_3$ compared to the maximum. In the top-right sub-figure, the spectral abscissa $\alpha$ is visualized by contour plot. The stability boundary (where $\alpha=0$) is highlighted by thick red line and transferred to the bottom sub-figures, where $P_{max}$ and $W_{a,max}$ are contoured. In all the sub-figures, the admissible parameter ranges according to Table \ref{tab:experiment-decision-variables} are visualized by red regions. Additional to lower and upper bounds of the masses given by construction constraints, we require $m_a$ to be less then 20\% of $m_1 + m_2 + m_3$.  
The grid points within the red regions correspond to possible mass adjustments, as they only be changed by either adding or removing plates of $\SI{0.025}{kg}$ weights. Besides, the nominal and optimized values are highlighted by black circle and star, respectively. As can be seen, both the partial objectives $\max\limits_{i}W_{i,\max}~\text{[J]}$ (maximum achieved for $i=4$) and $P_{\max}~\text{[W]}$ are minimal for the same mass value set given in Table \ref{tab:experiment-decision-variables}, see also Table \ref{tab:experiment-criterion} for comparison of objectives for nominal and optimized parameters and the control feedback gain-delay set. As given in Table \ref{tab:experiment-criterion}, the value of $W_{a,max}$ was substantially reduced, while $\alpha$ changed only slightly. 

To complete the analysis, the analytical results are confirmed by experimental results shown in Fig. \ref{fig:experiment-elastic-energies-absorber}. Note that the passive regime can be seen in the interval $t\in[0, 2.5]\rm{s}$ after which the delayed position feedback of DR is turned on resulting to almost complete vibration suppression at the target mass $m_2$ after the short-time transient. As can be seen, substantially better performance was achieved with the optimized parameters compared to the nominal parameter set. Thus, the experiment confirms the analytically obtained results.

\color{black}
\section{Numerical complex model case study}
\label{section:case-study}
 In order to validate the \textcolor{black}{above proposed optimization based design, within this numerical case study}, we consider the system configuration with $d=5$, $p=1$, and $s=3$, i.e., consisting of five masses with the vibration absorber being deployed at the mass $m_1$, and with the objective to fully suppress vibration of mass $m_3$.  
 The harmonic force \eqref{eq:disturbance-cosine-form} acting at mass $m_5$ is characterized by $\bar f_d=\SI{1}{N}$ and frequency $\omega=\SI{23.248}{s^{-1}} (\SI{3.7}{Hz})$.  
 The nominal parameters of the structure are assumed as: $m_i=\SI{1}{kg},\ i=1, \dots, 4$, $m_5=\SI{2}{kg}$, $m_a=\SI{0,5}{kg}$, $k_i=\SI{750}{Nm^{-1}},\ i=1, \dots, 5$, $k_a=\SI{700}{Nm^{-1}}$, $c_i=\SI{2}{Nsm^{-1}},\ i=1, \dots, 5$, $c_a=\SI{2}{Nsm^{-1}}$.

The optimization problem is considered in the form \eqref{mini:multiobject-minimisation-problem-transformed}--\eqref{eq:constraint2:Wa-maximum}. \textcolor{black}{Contrary to the experimental case study}, the parameters to be determined are the stiffness at all the links $k_i,\ i=1, \dots, 6$, and a complete DR parameter set $m_a,\ k_a,\ c_a$, while all the other parameters of the system are assumed to be fixed. 
\color{black}
Thus, the vector of decisive parameters is given by $\theta=[m_a, c_a, k_a, k_1, k_2, k_3, k_4, k_5, k_6]\tran$.
In \cref{tab:masses-decision-variables}, we refer to $\theta^{\mathrm{lb}}$ and $\theta^{\mathrm{ub}}$ as vectors of lower and upper bounds for our decision variables, respectively. Additionally, we denote $\theta^{0}$ as a vector of \emph{nominal} parameters.
  For the scaling purposes within the objective function, we assume $W_{\nom}(\theta^{0}) = \max\{ W_{i, \max}(\theta^{0}):\ i=1, \dots, 6 \}=\SI{0.0607}{J}$, $P_{\nom}(\theta^{0}) = P_{\max}(\theta^{0})= \SI{0.0111}{W}$.
  Additionally, we select $\gamma=0.5$. Matrices $\mathbf{A} = [\mathbf{I}, -\mathbf{I}]\tran$, $\mathbf{b} =[\theta^{\mathrm{ub}}, -\theta^{\mathrm{lb}}]\tran$ define the linear inequality constraints from \eqref{eq:constraint2:Ax<=b}.
  \color{black}
Stability is guaranteed by imposing a boundary $\xi_{\alpha}=\SI{-0.2}{s^{-1}}$ in \eqref{eq:constraint2:abscissa-stable}, ensuring that the spectral abscissa of the optimized system will not be worse than the nominal one (we refer to \cref{tab:criterion}).
To enforce inequality constraint \eqref{eq:constraint:Wa-maximum}, we require the maximum elastic potential energy at the absorber's link to be lower than $\xi_{a} = 0.01 \ \text{J}$, which can be considered as a rather conservative value.

\color{black}
For determining the spectral abscissa, \verb|tds_sa| function from the Matlab \emph{TDS-CONTROL} toolbox  \cite{appeltans2022tds}, \cite{appeltans2023analysis} is used (with setting:  \verb|rhp=-1.2|,  \verb|max_size_evp=1200|, \verb|newton_tol=1e-10| and \verb|newton_max_iter=20|). This function implements a spectral discretization method \cite{wu2012reliably} for determining position of the rightmost characteristic roots. The defined optimization task was solved using GRANSO (with  setting: \verb|mu0=5.5|, \verb|opt_tol=1e-8|, \verb|rel_tol=0|,  \verb|step_tol=1e-12|, \verb|viol_ineq_tol=0|). The required 
gradients of the objective and constraint functions were approximated by finite differences with spacing \verb|1e-8|.
\begin{table}[t]
      \centering
      \begin{tabular}{cc|cccc}
        variable & units & lower bound & upper bound & nominal & optimized \\
        $\theta$ & & $\theta^{\mathrm{lb}}$ & $\theta^{\mathrm{ub}}$ & $\theta^{0}$ & $\theta^{*}$ \\
        \hline
           $m_a$ & $\SI{}{kg}$ & 0.2 & 2.0 & 0.5 & 0.675 \\
           $c_a$ & $\SI{}{Nsm^{-1}}$ & 1.0 & 10.0 & 2.0 & 4.134 \\
           $k_a$ & $\SI{}{Nm^{-1}}$ &400 & 2000 & 700 & 699.863 \\
           $k_1$ & $\SI{}{Nm^{-1}}$ &400 & 2000 & 750 & 736.119 \\
           $k_2$ & $\SI{}{Nm^{-1}}$ &400 & 2000 & 750 & 761.605 \\
           $k_3$ & $\SI{}{Nm^{-1}}$ &400 & 2000 & 750 & 770.249 \\
           $k_4$ & $\SI{}{Nm^{-1}}$ &400 & 2000 & 750 & 599.090 \\
           $k_5$ & $\SI{}{Nm^{-1}}$ &400 & 2000 & 750 & 727.512 \\
           $k_6$ & $\SI{}{Nm^{-1}}$ &400 & 2000 & 750 & 530.197 \\
      \end{tabular}
      \caption{Nominal and optimized values of decision variables $\theta$, together with lower and upper bounds.}
      \label{tab:masses-decision-variables}
\end{table}
\begin{table}[t]
     \centering
     \begin{tabular}{l|l|ll|ll|ll}
          & \multicolumn{3}{c|}{\small{Objective function} ($\gamma = 0.5$)} & \multicolumn{2}{c|}{\small{Inequality constraints}} & \multicolumn{2}{c}{\small{Control law}} \\
          \multicolumn{1}{c|}{setup} &  $J~\text{[-]}$ & $\max\limits_{i}W_{i,\max}~\text{[J]}$ & $P_{\max}~\text{[W]}$ & $\alpha~\text{[}\SI{}{s^{-1}}\text{]}$ &  $W_{a, \max}~\text{[J]}$ & $g~\text{[}\SI{}{Nm^{-1}}\text{]}$ &  $\tau~\text{[s]}$ \\ \hline
          nominal & \textbf{1.0} & 0.01115 &  0.06067 & -0.20926 & 0.00238& -129.96 & 0.04617 \\
          optimized & \textbf{0.111} & 0.00128 & 0.00653 & -0.22208 & 0.00005& -368.53 & 0.01682\\
     \end{tabular}
     \caption{Comparison of objective function values, inequality constraints and control law parameters for the nominal and optimized parameter settings.}
     \label{tab:criterion}
\end{table}

Optimization task was performed from 100 starting points $\theta$ selected randomly respecting the defined upper and lower bounds. 
Optimization by GRANSO run under MATLAB R2022a on a workstation with AMD Ryzen~5 2600 (6 physical cores, 3.4 GHz) and 32 GB RAM. 
The overall solution took 535 minutes. 
The optimized parameter set is given in \cref{tab:masses-decision-variables} whereas the values of the criterion and quantities bound by the  constrains are given in \cref{tab:criterion}. Notice that, interestingly, the decision parameter optimization lead to relatively small adjustments compared to their nominal values. 
%
The objective function $J$ was substantially reduced from the normalized value $1.0$ to $0.111$, projecting the reduction of the maximum of the elastic potential energy $W_{\max}$ from $\SI{0.01115}{J}$ to $\SI{0.00128}{J}$, and of DR actuator power $P_{\max}$ from $\SI{0.06067}{W}$ to $\SI{0.00653}{W}$.
It can also be seen in \cref{tab:criterion} that the spectral abscissa was kept slightly below the $\alpha$ constraint set to $\xi_{\alpha}=\SI{-0.2}{s^{-1}}$. Furthermore, the safety constraint at the DR link $\xi_a=\SI{0.01}{J}$ was satisfied with very large margin as $W_{a,\max}$ was reduced from $\SI{0.00238}{J}$ to negligible $\SI{0.00005}{J}$. The DR feedback \eqref{eq:resonator-ut} parameters, were determined by 
\eqref{eq:DRpar2} with $k=0$.
%
 

\color{black}
 Distributions of the spectra computed by \verb|tds_roots| function of \emph{TDS-CONTROL} toolbox (with the same setting used for \verb|tds_sa|) are shown in \cref{fig:spectrums} for both the nominal and optimized systems. Due to the involvement of the delay in the applied DR feedback, both systems represented by DDAE \eqref{eq:stability:DDAE} (though with different parameters) have infinite number of roots \textcolor{black}{with retarded distribution}. 
 \textcolor{black}{From the detail of the rightmost spectrum,} 
 \textcolor{black}{the fulfillment of the predefined spectral abscissa constraint can be checked visually.} 
\begin{figure}[t]
    \centering
    \includegraphics[width=0.9\textwidth]{./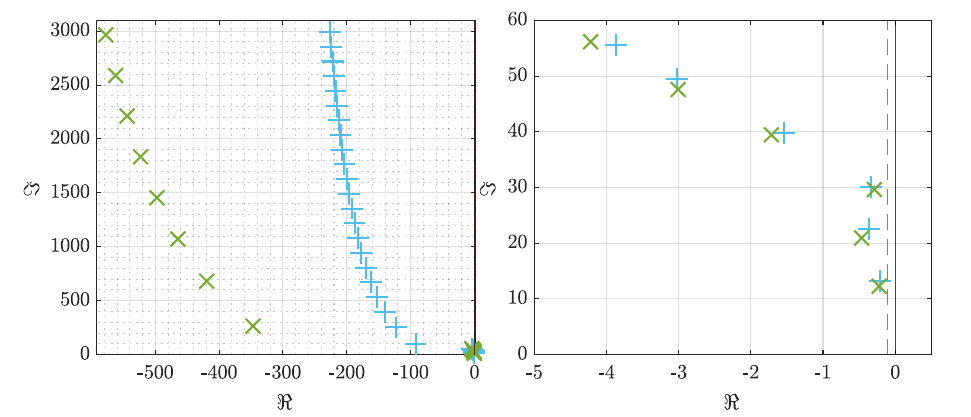}
    \caption{Spectrum distribution of the system with the DR, represented by DDAE \eqref{eq:stability:DDAE} with both nominal parameters (blue, +), and optimized parameters (green, x). {\em Left} --- Large scale view; {\em Right} --- Rightmost, stability determining roots with visualized value of the spectral constraint  $\xi_{\alpha}=-0.2$ (red, dashed).}
    \label{fig:spectrums}
\end{figure}
\begin{figure}[p]
    \centering
    \includegraphics{./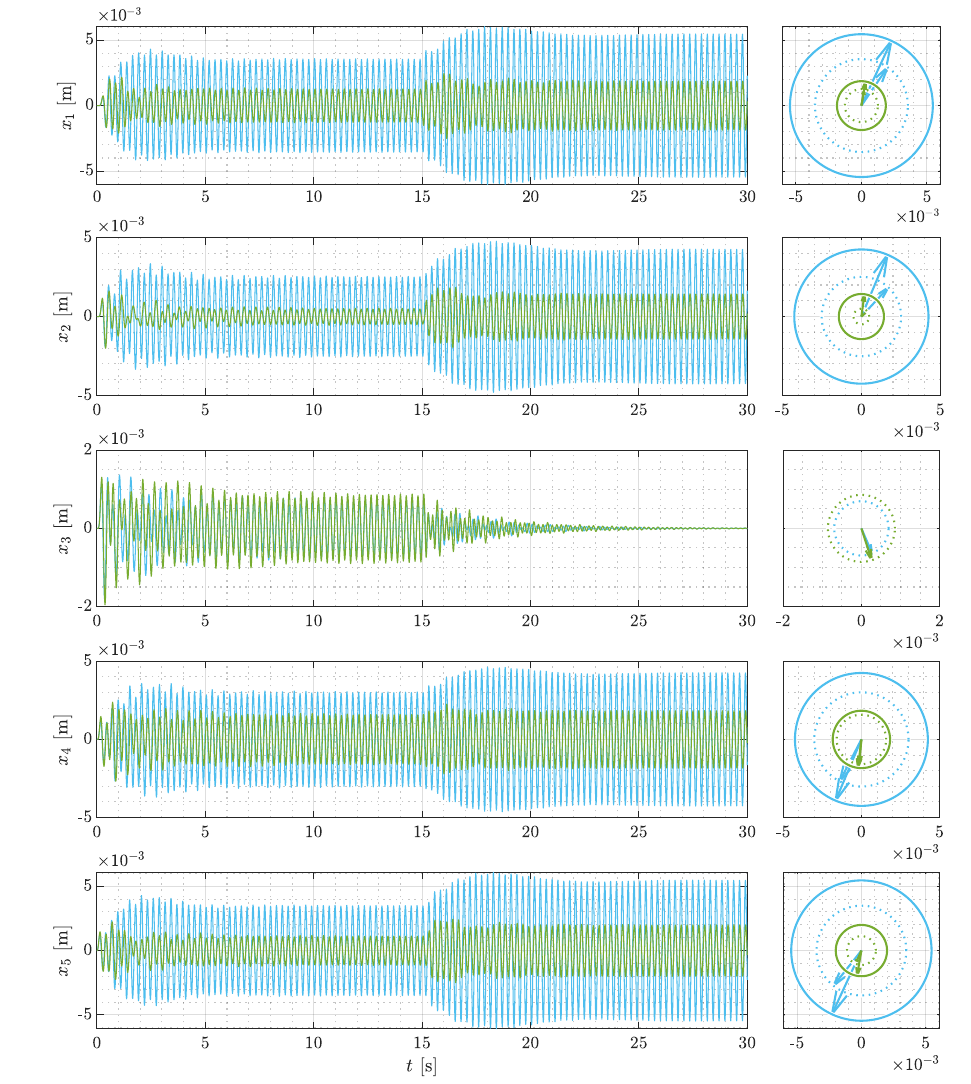}
    \caption{Motion of the masses $m_i,\ i=1, \dots, 5$ for the whole system described by \eqref{eq:stability:DDAE} with nominal parameters (blue) and optimized parameters (green). {\em Left} --- Simulated transients from zero initial conditions: passive regime at $t\in[0, 15)\, \mathrm{s}$; active regime $t\in[15, 30]\, \mathrm{s}$. {\em Right} --- Equilibrium oscillations in phasor view determined by \eqref{eq:passive-positions} for passive regime (dotted), and either by  \eqref{eq:prop1-proof-solved-xr} for $x_1, x_2$, or by \eqref{eq:prop1-proof-solved-xv} for $x_4, x_5$ for the active regime (solid).}
    \label{fig:positions}
\end{figure}
\begin{figure}[p]
    \centering
    \includegraphics{./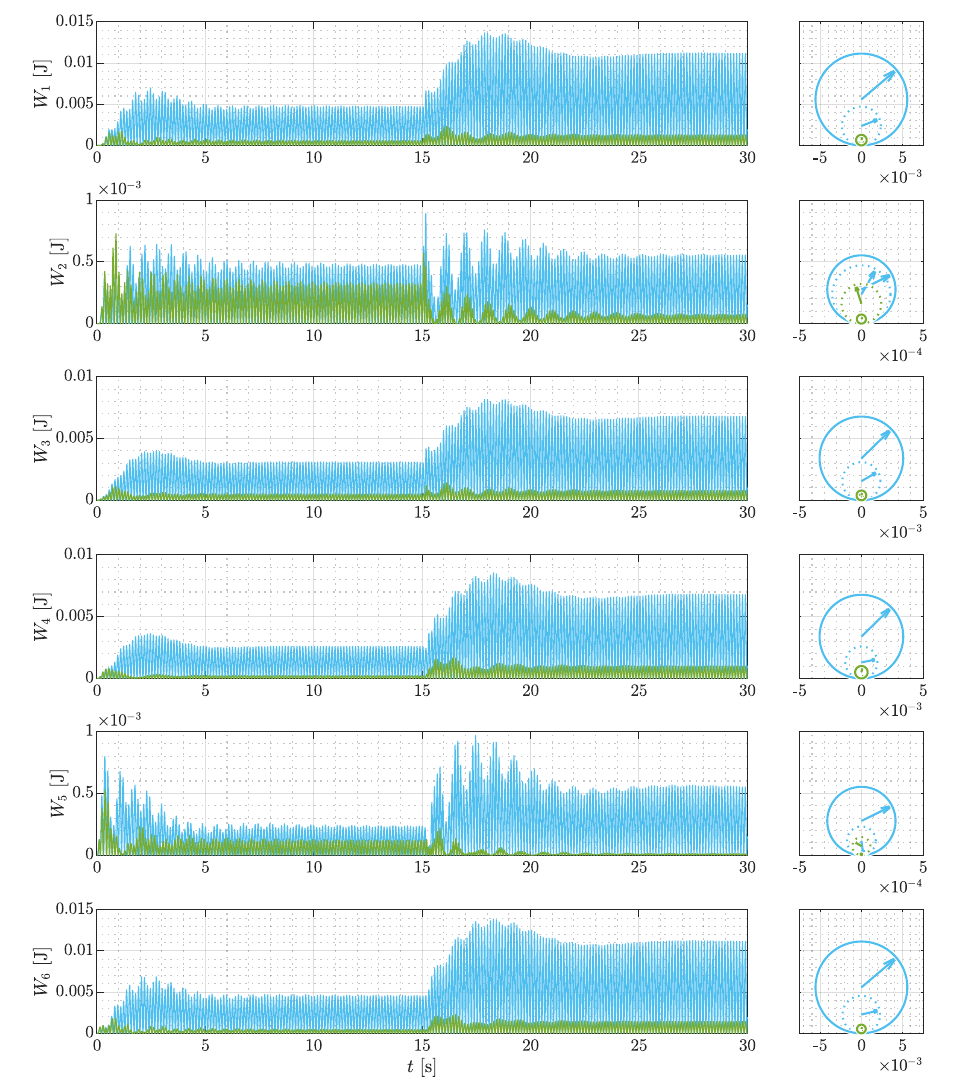}
    \caption{Potential elastic energies in links $i=1, \dots, 6$ for the whole system \eqref{eq:stability:DDAE} with nominal parameters (blue) and optimized parameters (green). {\em Left} --- Simulated transients from zero initial conditions: passive regime at $t\in[0, 15)\, \mathrm{s}$; active regime $t\in[15, 30]\, \mathrm{s}$. {\em Right} --- Equilibrium oscillations in phasor view determined by \eqref{eq:passive-elastic-energy-phasor-time} with the center point shifted by time average \eqref{eq:passive-elastic-energy-phasor-mean} for passive regime (dotted), and by \eqref{eq:elastic-energy-phasor-time} with the center point shifted by \eqref{eq:elastic-energy-phasor-mean} for the active regime (solid).}
    \label{fig:elastic-energies}
\end{figure}
\begin{figure}[p]
    \centering
    \includegraphics{./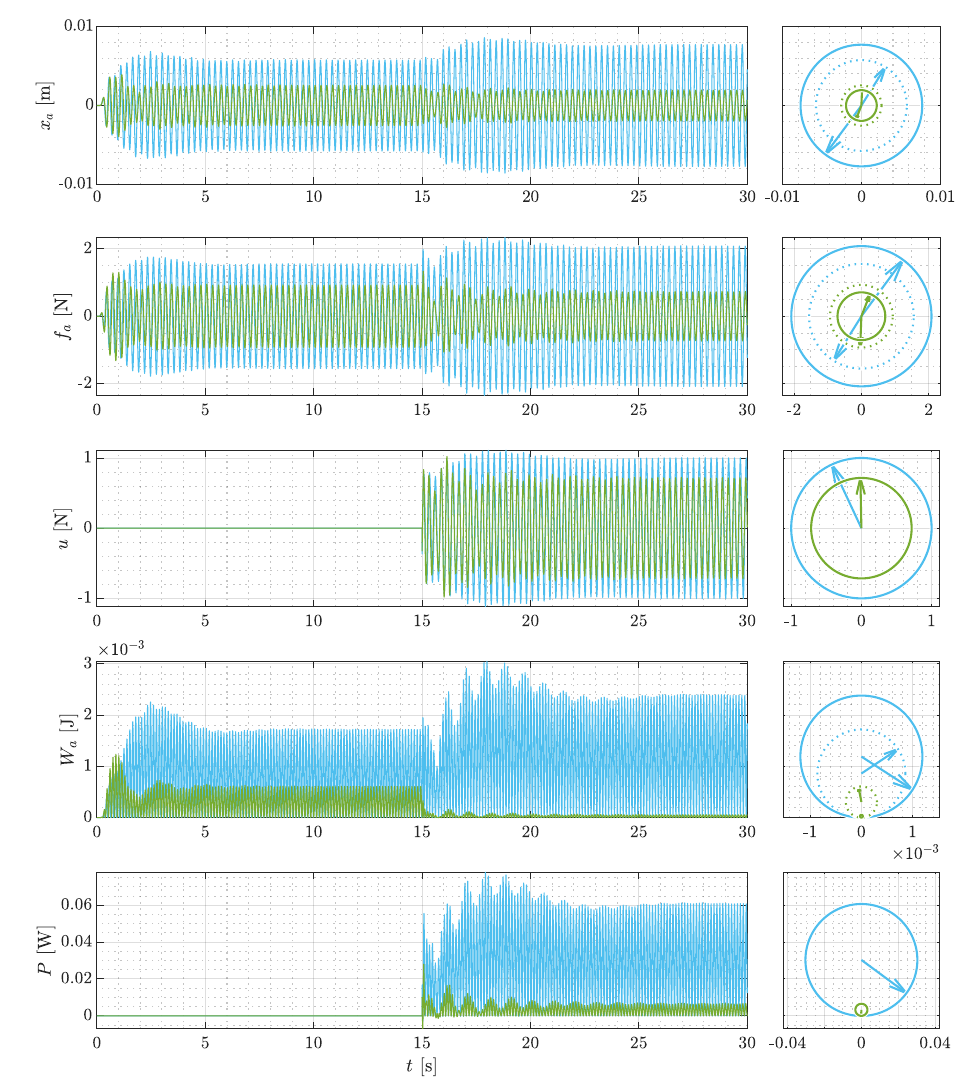}
    \caption{Absorber's characteristics: motion of mass $m_a$; total force acting on absorber $f_a$, control force $u$, elastic potential energy between $m_1$ and $m_a$ and actuator power $P$, with nominal parameters (blue) and optimized parameters (green). {\em Left} --- Simulated transients from zero initial conditions: passive regime at $t\in[0, 15)\, \mathrm{s}$; active regime $t\in[15, 30]\, \mathrm{s}$.  {\em Right} --- Equilibrium oscillations \textcolor{black}{of according quantities in phasor view for both passive regime (dotted, see \ref{secap}) and active regime (solid, see Section \ref{sec:dr-anaylisis-design}).}}
    \label{fig:absorber}
\end{figure}
 
 \color{black}
 Results of simulations performed in Matlab-Simulink (solver \verb|ode45|,  relative tolerance \verb|1e-6|) for both nominal and optimized systems are shown in the left subfigures of \cref{fig:positions}--\cref{fig:absorber}. 
\color{black}
 Starting from zero initial conditions, first, the transients of the passive regime are shown in the time range $t\in[0, 15] \, \mathrm{s}$. 
 Subsequently, at $t=\SI{15}{s}$, the DR's control law \eqref{eq:resonator-ut} is turned on, which results in transients heading towards complete vibration suppression of the target mass $m_3$, which is (practically) done for $t>\SI{25}{s}$. In the right sub-figures of \cref{fig:positions}--\cref{fig:absorber}, the amplitudes and relative position of the phasors for the steady-state oscillations are shown, which were obtained analytically for both active and passive regimes (for the passive, see \ref{secap}). 

In \cref{fig:positions}, we can see that turning the DR feedback on results in increasing equilibrium motion amplitudes for $x_1$, $x_2$, $x_4$, and $x_5$ compared to the passive regime. Though, the amplitudes at the active regime for the optimized system are smaller compared to the amplitudes at the passive regime for the nominal system.
In \cref{fig:elastic-energies}, the mass motion is projected to elastic potential energies at the connecting links $W_i,\ i=1, \dots, 6$. In the active regime, the maxima of these energies in all the links of the optimized system are substantially reduced compared to the nominal system. Notice also that for the optimized system, when turning the control on, the maxima of energies decrease in the links 2 and 5, while they increase for the links 1, 3, 4 and 6. On the other hand, for the nominal system, the maxima of all the energies increase when comparing passive and active regimes. 
\textcolor{black}{Note also that $W_{\max}$ from \cref{tab:criterion} corresponds to the link no. 1.}
 
 Subsequently, in \cref{fig:absorber}, all the observed quantities for the DR absorber are shown. Interestingly, for the optimized system, not only the equilibrium amplitude of position $x_a$, but also the amplitude of the overall force $f_a$ in the absorber's link are slightly smaller for the active regime compared to the passive regime. Both passive and active amplitudes of the optimized system are considerably smaller than the corresponding amplitudes of the nominal system. As regards the control action amplitude, it is only slightly smaller for the optimized system compared to  the nominal system. Relatively large difference in favor of the optimized system can be seen in the maxima of targeted elastic potential energies $W_a$, already for the passive regime. In the active regime, the differences are substantially amplified. Analogous substantial reduction can be observed for the maximum of the needed power, see also \cref{tab:criterion} with the numerical values of $W_{a,\max}, P_{\max}$.

\color{black}

\section{Conclusions}
\label{section:conclusions}
A thorough analysis of non-collocated vibration absorption of a system composed of a series of flexibly linked masses and excited by a harmonic disturbance force is performed. Next to the motion amplitudes, the elastic potential energies in the flexible links between the masses are studied with the aim to assess the risk of fatigue across the structure. 
It is shown that as soon as vibrations at the target mass are fully suppressed,  the system can be separated into a vibrating subsystem excited by the disturbance force and the resonating subsystem excited by the force generated by the active vibration absorber. At this stage, the motion amplitudes and consequently the maxima of the elastic potential energies in the links are uniquely determined by the actuation force amplitude, its frequency and the parameters of the system, and they cannot be changed neither by the absorber parameter adjustment nor by its control feedback. Thus, if the risk of fatigue is to be minimized across the structure, it needs to be done by optimizing the structural parameters of the system. 

A preliminary step towards the subsequent optimization based design is a complete parameter assessment of the delayed resonator with position feedback and the stability analysis in terms of determining the spectral abscissa of the whole system. 
Attention is also paid to determining the power needed for keeping the target mass stopped at the steady-state oscillation stage and to differences between the maxima of potential energies at the passive and active regimes. 
Taking the findings of the structural analysis into account, the simultaneous structural optimization and control design of the delayed resonator are translated into an optimization problem. Its objective function is selected to balance the maximum from the potential energy set, corresponding to the link most prone to fatigue, and the power needed by the delayed resonator to absorb the vibration at the target. The objective function to be minimized is supplemented by a set of constraints, including stability of the overall system, construction constraints, and safety constraint on the absorber's link. 

\textcolor{black}{For validation of the analytical results, two case studies are included. In the first one performed for the experimental setup with three linked masses, the mechanisms by which the optimized parameters influence the optimization objective is explained and experimentally validated. In the numerical case study, the proposed optimization based design is applied to a system consisting of five linked masses}. It is demonstrated that the problem can be well solved by the GRANSO optimization tool. The case study documents that the resistance of the system to fatigue as well as the needed power can be substantially reduced by the proposed approach. 
Further research directions include, the use of more complex feedback laws to either increase stabilization potential or suppress more than one frequency, and the extension to more general structures, \textcolor{black}{studying and compensating the effect of nonlinearities with subsequent experimental validation on a system with more carts}. This will also include the involvement of acceleration feedback which will turn the system spectrum posture from retarded to neutral.
\color{black}

\section*{Acknowledgements}
The authors thank Professor Nejat Olgac, University of Connecticut, for motivating this work and for a valuable discussion concerning the presented research results.

\appendix
\section{Characteristics of the passive regime}\label{secap}
For consistency, we outline here selected formulas for the steady state of the passive case, mirroring those derived above for the steady-state oscillations of the non-collocated vibration suppression. Assuming $\vec{u}=0$, and the relation between $\vec{x}_a(t)$ and $\vec{x}_p(t)$ from \eqref{eq:resonator-vector-absorber}, the force in the absorber's link, in the phasor form, is given by
\begin{equation}\label{eq:fa-passive2}
 \vec{f}_a = \frac{-m_a\omega^2(\jmath\omega c_a + k_a)}{k_a-m_a\omega^2+\jmath\omega c_a}\vec{x}_p.
\end{equation}
Using $\vec{x}_p = \mathbf{e}_p\tran \vec{\mathbf{x}}$, the system phasor balance by \eqref{eq:sys-general-vector-form} can be turned to
\begin{equation}\label{eq:sys-general-vector-form-passive}
    \mathbf{P}(\omega) \vec{\mathbf{x}} = \mathbf{B}_d\vec{f}_d,
\end{equation}
where
\begin{equation}
      \mathbf{P}(\omega) = \mathbf{A}(\omega) +\mathbf{B}_a\mathbf{e}_p\tran \frac{m_a\omega^2(\jmath\omega c_a+k_a)}{k_a-m_a\omega^2+\jmath\omega c_a}.
\end{equation}
This allows us to express the positions of masses in a linear chain as
\begin{equation}\label{eq:passive-positions}
    \vec{x}^\mathrm{p}_i = \mathbf{e}_i\tran \mathbf{P}^{-1}(\omega) \mathbf{B}_d\vec{f}_d,
\end{equation}
and position of absorber $m_a$ as
\begin{equation}\label{eq:passive-position-ma}
      \vec{x}^\mathrm{p}_a = \frac{\jmath\omega c_a + k_a}{k_a - m_a\omega^2 + \jmath\omega c_a} \mathbf{e}_p\tran \mathbf{P}^
{-1}(\omega) \mathbf{B}_d\vec{f}_d.
  \end{equation}
  the elastic potential energies are defined as
  \begin{equation}\label{eq:passive-elastic-energies-vector-form}
     W^\mathrm{p}_i(t) = \Bar{W}^\mathrm{p}_{i, \mean} + \Re\left\{ \vec{W}^\mathrm{p}_i e^{\jmath 2\omega t}\right\},
  \end{equation}
where, assuming the difference between positions $\Delta \vec{x}^\mathrm{p}_{i, i-1} = \vec{x}^\mathrm{p}_i - \vec{x}^\mathrm{p}_{i-1}$
\begin{equation}\label{eq:passive-elastic-energy-phasor-mean}
     \Bar{W}^\mathrm{p}_{i, \mean} = \frac{1}{4}k_i \Delta \vec{x}^\mathrm{p}_{i, i-1} (\Delta \vec{x}^\mathrm{p}_{i, i-1})^{*},
\end{equation}
and
\begin{equation}\label{eq:passive-elastic-energy-phasor-time}
     \vec{W}^\mathrm{p}_{i} = \frac{1}{4}k_i (\Delta\vec{x}^\mathrm{p}_{i, i-1})^2.
\end{equation}
The maximal value of \eqref{eq:passive-elastic-energies-vector-form} is then given by
\begin{equation}
    W_{i, \max}^\mathrm{p} = \frac{1}{2} k_i \left| \left( (\mathbf{e}_i\tran - \mathbf{e}_{i-1}\tran)\mathbf{P}^{-1}(\omega)\mathbf{B}_d\vec{f}_d\right)^2\right|.
\end{equation}
The potential energies in the links by the base are handled by considering $\mathbf{e}_{0} = \mathbf{e}_{d+1} = \mathbf{o}_{d}$.
Finally, the elastic potential energy in the absorber's link is given by
\begin{equation}\label{eq:passive-elastic-energies-vector-form-absorber}
    W^\mathrm{p}_a(t) = \Bar{W}^\mathrm{p}_{p, \mean} + \Re\left\{ \vec{W}^\mathrm{p}_a e^{\jmath 2\omega t} \right\},
\end{equation}
where, assuming the difference between positions $\Delta \vec{x}^\mathrm{p}_{a, p} = \vec{x}^\mathrm{p}_a - \vec{x}^\mathrm{p}_{p}$,
\begin{equation}\label{eq:passive-elastic-energy-phasor-mean-absorber}
    \Bar{W}^\mathrm{p}_{a, \mean} = \frac{1}{4}k_a (\Delta \vec{x}^\mathrm{p}_{a, p})^{*} \Delta \vec{x}^\mathrm{p}_{a, p}
\end{equation}
and
\begin{equation}\label{eq:passive-elastic-energy-phasor-time-absorber}
     \vec{W}^\mathrm{p}_{a} = \frac{1}{4}k_a (\Delta\vec{x}^\mathrm{p}_{a, p})^2.
\end{equation}
Maximum of \eqref{eq:passive-elastic-energies-vector-form-absorber} can be expressed using \eqref{eq:passive-positions} and \eqref{eq:passive-position-ma} as 
\begin{equation}
    W_{a,\max}^\mathrm{p} = \frac{1}{2} k_a \left| \left(1-\frac{\jmath\omega c_a + k_a}{k_a - m_a\omega^2 + \jmath\omega c_a}\right) \mathbf{e}_p\tran \mathbf{P}^{-1}(\omega) \mathbf{B}_d\vec{f}_d \right|^2.
\end{equation}

\bibliographystyle{elsarticle-num}
\bibliography{references}

\end{document}